\documentclass[11pt,dvipsnames]{article}
\usepackage{ssArxiv}

\usepackage[utf8]{inputenc} 
\usepackage[T1]{fontenc}    
\usepackage{microtype}      
\usepackage[rgb]{xcolor}
\usepackage{caption}
\usepackage{cite}
\usepackage{textcomp}
\def\BibTeX{{\rm B\kern-.05em{\sc i\kern-.025em b}\kern-.08em
    T\kern-.1667em\lower.7ex\hbox{E}\kern-.125emX}}

\usepackage{amsmath} 
\usepackage{amsthm}
\usepackage{amssymb}  
\usepackage{amsfonts}       
\usepackage{nicefrac}       
\usepackage{accents}

\DeclareMathOperator{\relint}{relint}
\DeclareMathOperator{\interior}{int}
\DeclareMathOperator{\range}{rg}
\DeclareMathOperator{\aff}{aff}

\newcommand{\SetA}{\mathcal{A}}
\newcommand{\SetAbar}{\bar{\SetA}}
\newcommand{\SetB}{\mathcal{B}}
\newcommand{\Ball}{\mathcal{B}_{P}}
\newcommand{\SetP}[1]{\mathcal{P}(#1)}
\newcommand{\SetW}{\mathcal{W}}
\newcommand{\SetX}{\mathcal{X}}
\newcommand{\SetY}{\mathcal{Y}}
\newcommand{\SetZ}{\mathcal{Z}}
\newcommand{\SetV}{\mathcal{V}}
\newcommand{\SetVaug}{\mathcal{V}_{a}}
\newcommand{\Reals}[1]{\mathbb{R}^{#1}}
\newcommand{\SetS}{\mathcal{S}}
\newcommand{\map}{M}
\DeclareMathOperator{\Prob}{Pr}
\DeclareMathOperator{\bd}{bd}

\newtheorem{lemma}{Lemma}
\newtheorem{corollary}{Corollary}

\newtheorem{proposition}{Proposition}
\newtheorem{definition}{Definition}
\newtheorem{assumption}{Assumption}
\newtheorem{example}{Example}

\usepackage{graphicx} 
\usepackage{wrapfig}
\usepackage{subcaption}
\usepackage{multirow}
\usepackage{array}
\usepackage{booktabs}      
\usepackage{stfloats}
\usepackage{tikz}

\usepackage{algorithm}
\usepackage{algpseudocode} 
\usepackage{verbatim}

\usepackage{soul}
\usepackage{siunitx}
\definecolor{dodgerblue}{RGB}{30,144,255}
\definecolor{tomato}{RGB}{255,99,71}
\definecolor{seagreen}{RGB}{46,139,87}
\definecolor{orange}{RGB}{255,165,0}
\definecolor{orchid}{RGB}{218,112,214}

\setstcolor{red}
\newcommand{\legendline}[2]{\textcolor{#1}{\rule{6pt}{2pt}}\ #2}

\usepackage{url}            
\usepackage{hyperref}       
\usepackage{cleveref}       
\crefname{figure}{Fig.}{Figs.} 
\crefname{assumption}{Assumption}{Assumptions}

\title{Bridging Reinforcement Learning and Optimal Control via Feasible Action Mapping}
\author{\name{Stefan Richter$^{*}$}
\email{sr@richteroptimization.com} \\
\addr{Richter Optimization GmbH, Vienna, Austria},  \\[0.5em]
\name{Alberto Giammarino$^{*}$}
\email{alberto.a.giammarino@sony.com} \\
\addr{Sony AI, Tokyo, Japan} \\[0.5em]
\name{Guillem Torrente}
\email{guillem.torrente@sony.com} \\
\addr{Sony AI, Tokyo, Japan} \\[0.5em]
\name{Sam Blakeman}
\email{samrobertallan.blakeman@sony.com} \\
\addr{Sony AI, Zurich, Switzerland} \\[0.5em]
\name{Peter Dürr} 
\email{peter.duerr@sony.com} \\
\addr{Sony AI, Zurich, Switzerland} \\[0.5em]
}

\begin{document}
\maketitle
\small{$^{*}$ Contributed equally to this work.}

\begin{abstract}
    Operating constrained dynamical systems requires controllers to efficiently solve complex tasks while enforcing recursive feasibility and safety constraints.
    To address these competing requirements, we present Feasible Action for Optimal Control (FAOC), a novel control framework integrating Reinforcement Learning (RL) and Optimal Control (OC).
    The key contribution is a computationally efficient, optimization-based mapping algorithm that transforms the RL agent's action from a static abstract set into a state-dependent feasible parameter set of the Optimal Control Problem (OCP), guaranteeing strict satisfaction of the dynamical system's constraints.
    Thus, FAOC effectively combines the predictable safety of OC with the flexibility of RL.
    In contrast to prior work, the abstract action space of the RL agent does not require expert or heuristic design, and the OCP formulation is not compromised by the inability of RL to guarantee feasibility.
    We apply our approach to real-time motion planning for robot table tennis, which encapsulates these challenges.
    Via simulated experiments, we show that FAOC outperforms state-of-the-art baselines in both sample efficiency and closed-loop performance.
\end{abstract}



\section{Introduction}
\label{sec:Introduction}

Recent technological breakthroughs have significantly advanced the state of the art in the control of constrained dynamical systems.
One clear example is robotics, where fully autonomous robots, operating without pre-programmed routines, are now capable of performing complex tasks at human-level proficiency \cite{ace2026, kaufmann2012champion, luo2025preciseanddexterous}.
Reinforcement Learning (RL) has been a cornerstone of this progress, demonstrating the capability to solve complex control problems across several domains \cite{kaufmann2012champion, luo2025preciseanddexterous, wurman2022,hoeller2024anymal, hwangbo2019learning}.
Pure RL-based controllers, however, exhibit significant shortcomings, most notably the inability to strictly guarantee constraint satisfaction.
These limitations have restricted the adoption of RL in real-world applications, which typically require strict safety guarantees~\cite{robotics13040063, pmlr-v267-fan25i, GROS20208076}.

Another fundamental approach to the closed-loop control of constrained dynamical systems is Optimal Control (OC), specifically Model Predictive Control (MPC), which offers complementary properties to RL.
Most notably, OC incorporates prior knowledge of the system dynamics and explicitly enforces safety through input, state, and output constraints, ensuring safe operation even in unseen environments.
Nevertheless, the scale and complexity of formulating these tasks as Optimal Control Problems (OCPs) increase solution times, sometimes rendering the computation intractable within strict real-time deadlines.
Furthermore, OC remains challenging to apply to nonconvex, long-horizon problems with sparse reward signals, limiting its effectiveness in tasks requiring extensive exploration or delayed credit assignment \cite{torrente2021data, hewing2020learning, GARCIA1989335, MAYNE2000789, Borrelli_Bemporad_Morari_2017}.

The idea of combining RL and OC to leverage their complementary strengths has been widely explored (see~\cite{Reiter2025SynthesisOM} for a comprehensive survey and  \Cref{sec:literature_review} for a summary of related work). To explicitly handle long-horizon problems and sparse reward signals while retaining the advantages of OC, one common approach is to learn the terminal cost of the OCP using a Neural Network (NN) that approximates the infinite-horizon cost-to-go~\cite{zhong2013value, reiter2025ac4mpc, lowrey2018plan}. However, this approach generally renders the OCP nonconvex, which can severely impact real-time solvability.

Another common formulation combines RL with a parameterized OCP, where the RL agent selects OCP parameters to maximize long-term cumulative rewards. While successful, this coupling is difficult when RL actions affect OCP feasibility: some actions may render the OCP infeasible, requiring slack variables or careful reformulation. For tractability, existing methods often employ simple, static, continuous RL action spaces; however, the actual feasible parameter sets of the underlying OCP are typically state-dependent and geometrically complex. Consequently, static parameterizations may include infeasible or redundant actions while excluding optimal feasible ones. This forces the RL policy to implicitly learn poorly defined feasibility boundaries, thereby degrading learning efficiency and final performance~\cite{zhu2022overview}.

\begin{figure*}[t]
    \centering
    \includegraphics[trim=0.0cm 11.7cm 0.8cm 0.0cm,clip,width=\linewidth]{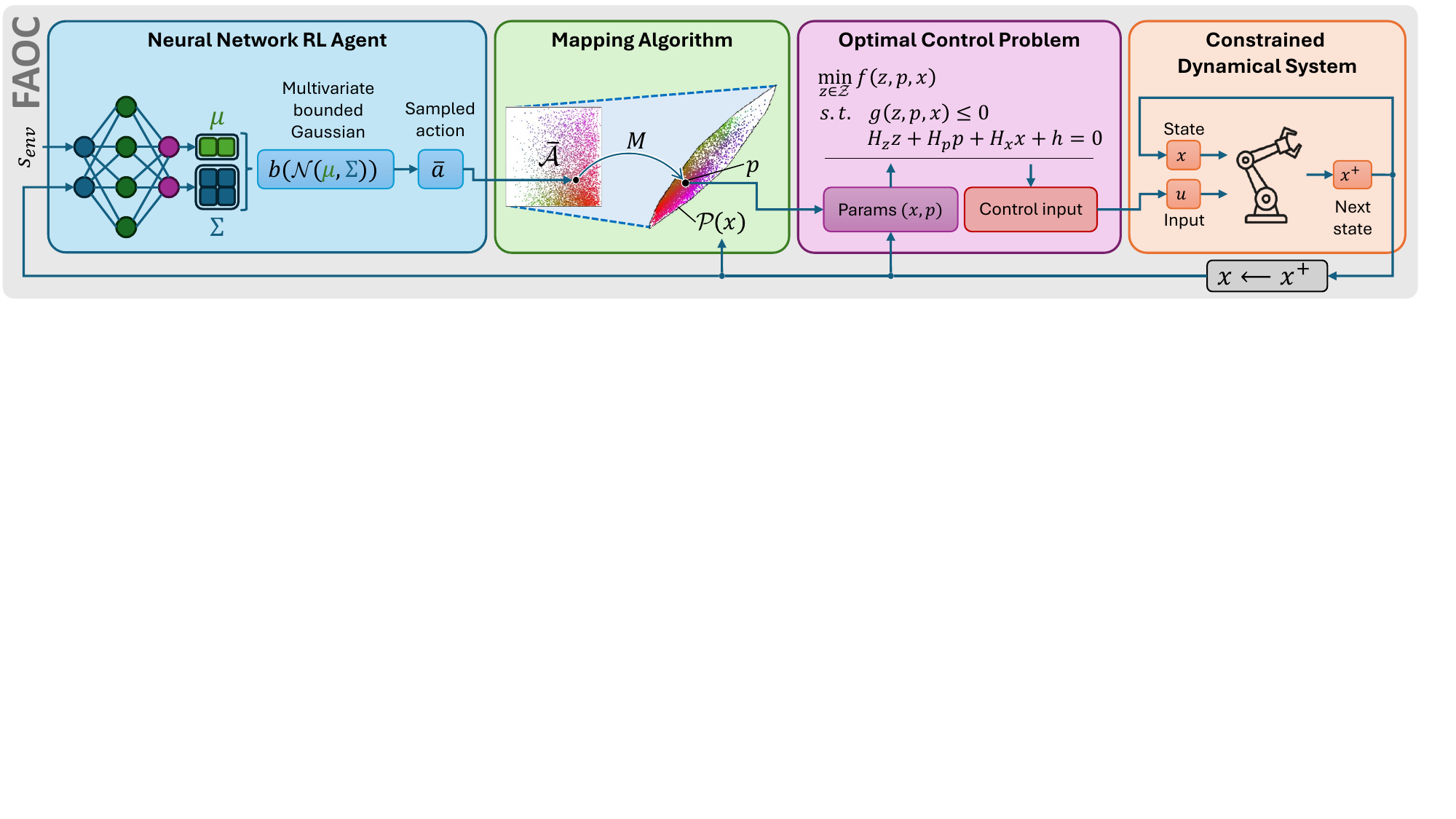}
    \caption{
        Overview of the FAOC framework illustrated for a 2D action space. From left to right:
        \emph{Step 1:} An RL agent outputs the parameters (mean~$\mu$ and covariance~$\Sigma$) of a multivariate Gaussian distribution, given the system state $x$ and potentially other observations $s_{env}$. A raw action is sampled from this distribution during training, or taken as the mean during deployment. A nonlinear function~$b$ transforms this unbounded raw action into an abstract action~$\bar{a}$ confined within a static, compact abstract set~$\SetAbar$.
        \emph{Step 2:} The abstract action~$\bar{a}$ is mapped via $\map$ into a parameter set~$\SetP{x}$ that depends on the current state~$x$ of the dynamical system. This set encompasses all feasible parameters for the underlying OCP (e.g., terminal state targets). The color gradient visualizes the continuous mapping between the sets.
        \emph{Step 3:} The mapped parameter~$p$ fixes the action-encoding vector in the OCP, which is subsequently solved to yield an optimal control input trajectory.
        \emph{Step 4:} This trajectory is applied to the system in open loop. The resulting subsequent state~$x^+$ is observed and serves as the initial condition for the next control cycle.
    }
    \label{fig:intro}
\end{figure*}

\subsection*{Main Contributions}
We introduce \emph{Feasible Action for Optimal Control} (FAOC), a novel RL-OC framework for constrained dynamical systems resolving persistent feasibility and exploration challenges. FAOC enables the RL policy to sample from a static, geometrically simple abstract set while guaranteeing the resulting OCP remains feasible and all valid parameters remain reachable. This relieves the RL agent from satisfying physical constraints, focusing its capacity solely on the overarching task (\Cref{fig:intro}).

Our main contributions are:%
\begin{itemize}
    \item[(i)] \textbf{Topological Characterization of OCP Feasibility.} We identify mild geometric conditions ensuring the state-dependent parameter set $\SetP{x}$ of a generic OCP is compact, solid, and convex (\Cref{lem:compactness_of_P}), explicitly demonstrated for linear MPC (\Cref{cor:linear_mpc}).
    \item[(ii)] \textbf{Invertible Feasible Action Mapping.} We propose a computationally efficient radial algorithm (\Cref{alg:basic}) bijectively transforming any abstract RL action $\bar{a} \in \SetAbar$ into a guaranteed-feasible parameter $p \in \SetP{x}$ (\Cref{prop:Finv_Alg1}).
    \item[(iii)] \textbf{Mitigation of Geometric Distortion.} To prevent point accumulation, we develop an area-matching directional transformation (\Cref{prop:phi_properties,prop:marginal}), and a scalable linear transformation utilizing affinely related ellipsoidal surrogates for arbitrary dimensions (\Cref{prop:ellipsoid_surrogate}).
    \item[(iv)] \textbf{Tractability for Implicitly Defined Sets.} For real-time scalability without explicit geometric representations of $\SetP{x}$, we derive robust formulations for computing interior points (\Cref{prop:vertex_inscribed,prop:dual_inscribed}) and target shape matrices (\Cref{prop:implicit_surrogate}) directly from OCP constraints (\Cref{ex:constant_phi}).
    \item[(v)] \textbf{Validation in Robot Table Tennis.} We apply FAOC to motion planning for an 8-DoF robot~arm, achieving~professional human-level table tennis \cite{ace2026} (\Cref{sec:exp_setup}). FAOC consistently surpasses state-of-the-art RL-MPC controllers in sample efficiency, final performance, and system~controllability. Notably, while the mapping algorithm assumes convex sets, this application highlights FAOC's capability in nonconvex domains: the RL agent resolves strategic nonconvexities (e.g., obstacle avoidance, nonlinear kinematics), while the OCP handles local kinematic constraints yielding a convex set amenable to mapping. Our mapping algorithm and OCP implementation are open-sourced~\footnote{\url{https://github.com/SonyResearch/feasible_action_for_optimal_control}}.
\end{itemize}

\section{Related work}
\label{sec:literature_review}

Both RL and OC enable state-feedback control of constrained dynamical systems, yet they exhibit complementary strengths and weaknesses.
In particular, RL typically suffers from poor sample efficiency and lacks guarantees of constraint satisfaction.
Conversely, while OC systematically enforces constraints, its application to complex, nonconvex problems often suffers from computational intractability and sensitivity to initialization.
Specifically, local optimization solvers may converge to infeasible stationary points even when a globally feasible solution exists.
Consequently, combining these two technologies to leverage their respective advantages has emerged as a promising research direction.

Recent literature extensively explores this intersection, particularly focusing on MPC (see~\cite{Reiter2025SynthesisOM} for a comprehensive review).
For context, we first situate our contribution within the classification described in~\cite{Reiter2025SynthesisOM}.
Our approach belongs to the category of works in which a parameterized MPC is treated as part of the RL environment, and the RL agent learns state-dependent values for its parameters.
Furthermore, we consider a hierarchical architecture in which the RL policy does not depend on the OCP decision variables.
As a result, the RL actions---namely, the parameters of the MPC framework---are treated as constants rather than decision variables within the underlying OCP.

Most existing works adopt parameterizations that do not influence OCP feasibility, such as cost function parameterizations, since ensuring feasibility is generally beyond the capabilities of an RL agent.
While this allows any RL action to be admissible, it raises the fundamental question of how to define meaningful action space bounds.
In such formulations, the action space becomes a hyperparameter that must be tuned.
For instance, in~\cite{brito2021go}, the RL agent learns subgoals for a navigation task, with actions defined in a 2D space bounded heuristically by the maximum reachable distance (computed by applying maximum velocity over the horizon).
To guarantee OCP feasibility even if the desired goal is unreachable, the distance to the goal is penalized in the terminal cost rather than enforced as a hard constraint.
Similar cost penalization strategies are used in~\cite{brito2022learning} for tracking velocity references in dense traffic scenarios.
In~\cite{reiter2023hierarchical}, RL is combined with nonlinear MPC where actions corresponding to reference lateral position and speed in Frenet coordinates~\cite{reiter2021parameterization} are mapped via a domain-informed expansion function to a high-dimensional MPC cost parameterization.
Other examples include quadrotor waypoint selection~\cite{greatwood2019reinforcement} and learning parameters for both the cost and control barrier functions using an unbounded action space~\cite{sabouni2024reinforcement}.
An exception to these heuristic bounds is~\cite{zarrouki2024adaptive}, where the RL agent learns two parameters (the stochastic propagation horizon and the constraint robustification factor) that are naturally bounded by physical and probabilistic considerations.
However, such well-bounded parameterizations are problem-specific and not easily generalized.

In approaches where OCP feasibility directly depends on the RL-selected parameters, but the action space itself does not inherently guarantee feasibility, the RL agent must explicitly learn to avoid infeasible actions, which ultimately hampers final performance.
For example, in~\cite{tram2019learning}, RL actions define constraints for intersection handling, and infeasible configurations are penalized through the reward function.
Similarly, in~\cite{pfrommer2022safe}, a chance-constrained MPC acts as a safety filter during training; infeasible OCPs trigger constraint relaxation and re-solving until the agent eventually learns safe actions.
An alternative approach is presented in~\cite{zarrouki2024safe}, where the RL action space is restricted to a finite set of Pareto-optimal MPC weight configurations obtained via Bayesian Optimization.
While this guarantees safety and enables efficient exploration, the discretization heavily restricts the action space and limits overall performance.

In this work, we simultaneously address the two primary issues highlighted above: (i) mitigating the need for ill-defined, tuned action spaces that induce trade-offs, and (ii) ensuring OCP feasibility when RL actions directly define constraints.
We achieve this by mapping the RL agent's output from a static abstract action set into a feasible, state-dependent parameter set of the OCP, ensuring that all feasible actions are included and infeasible ones are strictly excluded.

A related idea was proposed in~\cite{kiemel2021learning}, where a one-dimensional linear mapping transforms the RL output from a static range~$[-1,1]$ into a feasible set determined by the current system state.
While conceptually similar, this approach is limited to independent one-dimensional mappings.
As we demonstrate in our numerical application (\Cref{sec:results}), extending this concept to mappings over general $n$-dimensional sets provides greater expressiveness and significantly improves performance.

Another attempt to generalize this mapping is presented in~\cite{Yuan2023Action}; however, their method assumes the abstract action set can be represented as a hypercube and requires the feasible parameter set to contain the origin---both of which are restrictive assumptions.
In contrast, our method supports arbitrary compact, convex sets by incorporating a robust procedure to compute an interior point, thereby eliminating the restrictive assumption that the origin must lie within the interior of the feasible parameter set.
Additionally, it accounts for the geometric shapes of both the abstract set and the state-dependent parameter set, mitigating point accumulation and thus accelerating learning.

Finally, we note that \cite{theile2025action} proposes learning the mapping via supervised learning to handle nonconvex or disconnected feasible sets, whereas our exact analytical approach focuses on strict constraint guarantees.

Our methodology relies on an invertible mapping between a static abstract set and a state-dependent feasible parameter set. Conceptually, this foundation aligns with optimal transport, which utilizes continuous deterministic mappings to transform probability distributions \cite{sadr2026optimal}---a mechanism our area-matching transformation parallels. Furthermore, the hierarchical structure of FAOC connects to inverse optimization, which infers unknown parameters to achieve specific outcomes, albeit driven by expert demonstrations rather than reward maximization. Recent applications of inverse optimization for learning MPC policies \cite{akhtar2021learning} and representing complex feasible sets \cite{ren2025inverse} share this core parameter-inference architecture.

\section{Problem Setup and Assumptions}

Consider a constrained dynamical system operating in an environment that requires complex, potentially long-horizon decision making. To govern such a system, we consider a hierarchical control architecture: a high-level RL policy evaluates the state of the environment to select strategic behaviors, while a low-level parameterized OCP computes optimal control inputs that satisfy physical and safety constraints while tracking these high-level targets.

The fundamental challenge within this architecture arises at the interface between the two layers. We assume that the RL policy operates in a static, geometrically simple abstract space. In contrast, the underlying OCP is governed by strict physical limits, which induce a complex, state-dependent feasible region for the action-encoding parameter vector. If the RL agent selects an abstract action that maps to an incompatible parameter configuration, the OCP becomes infeasible, causing the control system to fail. Therefore, our goal is to design an intermediary mapping algorithm that safely and bijectively translates the RL agent's abstract action into a guaranteed-feasible OCP parameter.

To ensure that the proposed FAOC framework encompasses a broad class of systems, we first establish the general topological properties required of both the RL policy output and the parameterized OCP. While we keep these initial structural assumptions as mild as possible to maximize theoretical generizability, we introduce specific computational restrictions (such as polytopic representations) in \Cref{sec:mapping} to facilitate a tractable, real-time algorithmic implementation.

\subsection{Reinforcement Learning Interface}
\label{sec:preliminaries_RL}

A high-level sequential decision-making process can be modeled as a finite-horizon discounted Markov decision process (MDP) described by the tuple $(\SetS, \SetA, \kappa, r, \rho_0, \gamma)$. Here, $\SetS$ is the continuous state space, $\SetA \subseteq \Reals{n}$ is the action space, $\kappa:\SetS\times\SetA \rightarrow P(\SetS)$ represents the stochastic transition kernel (where $P(\SetS)$ denotes the space of probability distributions over $\SetS$), $r:\SetS\times\SetA\rightarrow \Reals{}$ is the reward function, $\rho_0 \in P(\SetS)$ is the initial state distribution, and $\gamma \in [0, 1)$ is the discount factor.
We focus on continuous control problems where the learning agent is modeled as a parameterized stochastic policy $\pi_{\theta}(a|s)$, parametrized by a deep neural network with weights~$\theta$. The deep RL objective is to learn the optimal weights~$\theta^*$ that maximize the expected total discounted reward over a horizon~$N$:
\begin{equation}
    \label{eq:RL_obj}
    J(\theta) = \mathop{\mathbb{E}}_{\substack{s_0 \sim \rho_0 \\ a_k \sim \pi_{\theta}(\cdot|s_k) \\ s_{k+1} \sim \kappa(\cdot | s_k, a_k)} }\left[\sum_{k=0}^T \gamma^k r(s_k, a_k) \right].
\end{equation}

Because neither the transition dynamics $\kappa$ nor the reward function $r$ are available to the learning agent, model-free interaction with the environment is required.
A standard method to maximize the objective \eqref{eq:RL_obj} relies on estimating the action-value function $Q^{\pi}(s,a)$, which represents the expected cumulative discounted reward of taking action $a$ in state $s$ and subsequently following policy $\pi$:
\begin{equation*}
    Q^{\pi}(s,a)=\!\!\!\mathop{\mathbb{E}}_{\substack{a_k \sim \pi(\cdot|s_k) \\ s_{k+1} \sim \kappa(\cdot|s_k,a_k)}}\left[\sum_{k=0}^N\gamma^kr(s_k,a_k) \bigg| s_0=s, a_0=a\right].
\end{equation*}
In value-based deep RL implementations, a neural network parameterization $Q_{\psi}(s,a)$ approximates this function, and the policy parameters $\theta$ are updated to maximize $Q_{\psi}(s,a)$ with $a \sim \pi_{\theta}(\cdot|s)$. The widely adopted Soft Actor-Critic (SAC) framework operates on this principle.

Within these algorithms, the policy typically outputs a raw, potentially unbounded action $a \in \SetA$. To interface the RL policy with the underlying optimal control layer, this raw output is passed through a bounded nonlinear mapping $b$ (e.g., an element-wise $\tanh$ function), yielding the abstract action $\bar{a} = b(a) \in \SetAbar$. Consequently, the abstract action set $\SetAbar \subset \Reals{n}$ forms a compact set. For the proposed FAOC framework, we additionally assume that $\SetAbar$ is a solid, convex set (i.e., it possesses a nonempty interior, $\interior \SetAbar \neq \emptyset$), with a standard configuration being the hyperbox $[-1, 1]^{n}$.

\subsection{Parameterized Optimal Control Formulation}

For the mapping algorithm in \Cref{sec:mapping} to successfully couple the RL agent with the OC framework, we require the state-dependent parameter set of the OCP to share the topological properties of the abstract action set $\SetAbar$---namely, it must be compact, solid, convex, and of dimensionality $n$. To establish the mathematical conditions under which the parameter set satisfies these properties, we formulate the parameterized OCP as the following mathematical program:
\begin{align}
    \min_{z \in \SetZ} \quad & f(z, p, x) \label{eq:ocp}                  \\
    \text{s.t.} \quad        & g(z, p, x) \le 0, \nonumber                \\
                             & H_z z + H_p p + H_x x + h = 0\,. \nonumber
\end{align}
Here, $x \in \Reals{n_x}$ is the measured or estimated initial state of the dynamical system, which acts as a fixed parameter establishing the initial conditions. The vector $z \in \Reals{n_z}$ groups the OCP decision variables (e.g., the predicted state and input trajectories over a finite horizon). Finally, $p \in \Reals{n}$ is the action-encoding parameter dictated by the RL agent (e.g., representing a terminal state target).

Depending on the values of the initial state $x$ and the selected parameter $p$, the OCP \eqref{eq:ocp} may or may not possess a feasible solution $z$. To ensure the safe and continuous operation of the system given an initial state~$x$, the RL agent must only select parameters $p$ that guarantee problem feasibility. We formalize the topological properties of the state-dependent feasible parameter set in the following lemma.

\begin{lemma}[Topological Properties of the Parameter Set]
    \label{lem:compactness_of_P}
    For a given initial state $x$, let $\SetP{x}$ denote the effective domain of the parameterized OCP~\eqref{eq:ocp}, defined as the set of parameters $p$ for which a feasible decision vector $z$ exists:
    \begin{align*}
        \SetP{x} \triangleq \{p \in \Reals{n} \mid & \exists z \in \SetZ :
        g(z, p, x) \le 0,                                                  \\ &H_z z + H_p p + H_x x + h = 0\}\,.
    \end{align*}
    Assume that for every dynamically reachable initial state $x$:
    \begin{enumerate}
        \item The set $\SetZ \subseteq \Reals{n_z}$ is nonempty, compact, and convex.
        \item The cost function $f(z, p, x)$ is a real-valued continuous function everywhere on $\SetZ \times \Reals{n}$. The constraint function $g(z, p, x)$ is jointly convex in $(z, p)$ and continuous.
        \item The joint feasible set $\mathcal{S}(x) = \{(z, p) \in \SetZ \times \Reals{n} \mid g(z, p, x) \le 0, \, H_z z + H_p p + H_x x + h = 0\}$ has no directions of recession in $p$.
        \item There exist feasible points $\tilde{z} \in \relint\SetZ$ and $\tilde{p} \in \Reals{n}$ such that $H_z \tilde{z} + H_p \tilde{p} + H_x x + h = 0$. Furthermore, for these points, $g_i(\tilde{z}, \tilde{p}, x) < 0$ for all components $i$. Finally, $\range(H_p) \subseteq H_z \mathbb{T}$, where $\mathbb{T} = \aff\SetZ - \tilde{z}$.
    \end{enumerate}
    Under these conditions, the state-dependent parameter set $\SetP{x}$ is a compact, solid, and convex subset of $\Reals{n}$.
\end{lemma}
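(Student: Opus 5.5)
The plan is to view $\SetP{x}$ as the image of the joint feasible set $\mathcal{S}(x)$ under the coordinate projection $\pi(z,p)=p$, and to prove the three properties separately. Throughout, $x$ is fixed and is one of the dynamically reachable states, so Assumptions~1--4 hold.

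\emph{Convexity.} Joint convexity of $g$ in $(z,p)$ makes each sublevel set $\{g_i\le 0\}$ convex. The equality constraint is affine in $(z,p)$, and $\SetZ\times\Reals{n}$ is convex. Hence $\mathcal{S}(x)$ is an intersection of convex sets and is convex. The linear image $\SetP{x}=\pi(\mathcal{S}(x))$ is therefore convex.

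\emph{Compactness.} First, $\mathcal{S}(x)$ is closed: $\SetZ$ is closed, $g$ is continuous, and the affine constraint defines a closed set. Next I show $\mathcal{S}(x)$ is bounded. Its $z$-component lies in the compact set $\SetZ$, so any unbounded sequence would have to be unbounded in $p$. Since $\mathcal{S}(x)$ is closed and convex, it is bounded if and only if its recession cone is $\{0\}$. Because $\SetZ$ is bounded, every recession direction has the form $(0,d)$. Assumption~3 excludes $d\neq 0$, so the cone is trivial and $\mathcal{S}(x)$ is compact. As the continuous image of a compact set, $\SetP{x}$ is compact. It is also nonempty, since $\tilde p\in\SetP{x}$.

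\emph{Solidity (main step).} I need an open ball around $\tilde p$ contained in $\SetP{x}$. Take a perturbation $\delta p$ with $\|\delta p\|$ small. By the range condition $\range(H_p)\subseteq H_z\mathbb{T}$, there exists $\delta z\in\mathbb{T}$ with $H_z\delta z=-H_p\delta p$. To control the size of $\delta z$ uniformly, I fix a linear right-inverse: choose a basis of $\mathbb{T}$, so that $H_z$ restricted to $\mathbb{T}$ is a linear map whose image contains $\range(H_p)$. A pseudo-inverse on that image then gives a linear map $L$ with $\delta z=L\,\delta p$ and $\|\delta z\|\le\|L\|\,\|\delta p\|$. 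Set $(z,p)=(\tilde z+\delta z,\tilde p+\delta p)$. Three checks follow.
\begin{itemize}
\item The equality constraint holds by construction.
\item Since $\tilde z\in\relint\SetZ$ and $\delta z\in\aff\SetZ-\tilde z$ is small, $z\in\SetZ$.
\item Since $g_i(\tilde z,\tilde p,x)<0$ for all $i$ and $g$ is continuous, $g(z,p,x)\le 0$ for all sufficiently small perturbations.
\end{itemize}
Thus there is $\varepsilon>0$ with $\tilde p+\varepsilon\mathbb{B}\subseteq\SetP{x}$, so $\interior\SetP{x}\neq\emptyset$.

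The delicate point is exactly this use of the range condition. Without it, the affine constraint could force $p$ into a lower-dimensional affine subspace, even though a Slater point exists. The construction of the bounded linear lift $L$ restricted to $\mathbb{T}$ is what makes the relative-interior argument work.

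Continuity of $f$ is not needed for these set properties. It only guarantees, via Weierstrass on the compact fiber, that the OCP attains its minimum for every $p\in\SetP{x}$, so $\SetP{x}$ coincides with the set of parameters for which \eqref{eq:ocp} is solvable. I would remark on this at the end.
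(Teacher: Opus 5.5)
Your proposal is correct and follows the paper's proof essentially step for step. Convexity comes from convexity of $\mathcal{S}(x)$ and its projection, compactness from the absence of recession directions, and solidity from the range condition plus a strictly-feasible perturbation around $(\tilde z,\tilde p)$, with Weierstrass used only to identify feasibility with solvability. The two differences are refinements. You show $\mathcal{S}(x)$ itself is compact and take its continuous image, which avoids the paper's appeal to Rockafellar's Theorem~9.1 for closedness of the projection. Your explicit linear lift $\delta z = L\,\delta p$ makes rigorous the uniform smallness of $\delta z$, which the paper's ``sufficiently small $\|\delta p\|$'' leaves implicit.
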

\begin{proof}
    By Assumption~1, for any fixed parameter $p$, the feasible subset $\SetZ$ is closed and bounded, ensuring that---since it is nonempty---a minimum of the continuous cost function $f$ is attained (Weierstrass theorem). Thus, problem solvability directly coincides with feasibility.

    The joint feasible set $\mathcal{S}(x)$ is convex because it is the intersection of the convex set $\SetZ \times \Reals{n}$ with the convex sublevel sets of $g$ and the hyperplanes defined by the affine equality constraints. Because $g$ and the affine relations are continuous, $\mathcal{S}(x)$ is closed. The parameter set $\SetP{x}$ is the linear projection of $\mathcal{S}(x)$ onto the $p$-subspace. Because the linear projection of a convex set is convex, $\SetP{x}$ is a convex set.

    According to \cite[Theorem~9.1]{rockafellar1970convex}, the linear transformation of a closed convex set is closed if the set contains no directions of recession mapped to zero by the transformation. Because $\SetZ$ is compact, $\mathcal{S}(x)$ contains no such directions of recession, guaranteeing that $\SetP{x}$ is closed. By Assumption~3, $\mathcal{S}(x)$ contains no directions of recession in the $p$-subspace. Therefore, its projection $\SetP{x}$ is bounded \cite[Theorem~8.4]{rockafellar1970convex} and, being closed, $\SetP{x}$ is compact.

    Finally, we prove that $\SetP{x}$ is solid, i.e., $\interior \SetP{x} \neq \emptyset$. Consider the perturbation $p = \tilde{p} + \delta p$. By Assumption~4, for any $\delta p$, there exists a corresponding move $\delta z \in \mathbb{T}$ such that $H_z \delta z = -H_p \delta p$. This ensures the equality constraints remain satisfied, as $H_z (\tilde{z} + \delta z) + H_p (\tilde{p} + \delta p) + H_x x + h = 0$. Because $\tilde{z} \in \relint \SetZ$, $\tilde{z} + \delta z$ remains in $\SetZ$ for sufficiently small $\|\delta p\|$. Furthermore, because the components of $g$ are continuous and strictly satisfied at $(\tilde{z}, \tilde{p})$, $g_i(\tilde{z} + \delta z, \tilde{p} + \delta p, x) < 0$ remains true for sufficiently small $\|\delta p\|$. Thus, there exists a full-dimensional neighborhood around $\tilde{p}$ completely contained within $\SetP{x}$. Consequently, $\SetP{x}$ has a nonempty interior.
\end{proof}

The theoretical guarantees of \Cref{lem:compactness_of_P} apply to a broad range of OCP formulations, including those with a nonconvex cost function~$f$. To illustrate a specific instance, we introduce next a parameterized OCP with linear, time-varying dynamics and a parameter-dependent terminal constraint.

\begin{corollary}[Linear MPC with Terminal Constraint]
    \label{cor:linear_mpc}
    Consider the following parameterized OCP over a prediction horizon of length~$N$:%
    \begin{align}\label{eq:ocp_terminal}
        \min_{x_{0:N}, u_{0:N-1}} \quad & V(x_N) + \sum_{k=0}^{N-1} l(x_k, u_k, p)                                      \\
        \text{s.t.} \quad               & x_{k+1} = A_k x_k + B_k u_k, \quad \forall k \in \{0, \dots, N-1\}, \nonumber \\
                                        & x_0 = x, \nonumber                                                            \\
                                        & (x_k, u_k) \in \SetZ_k, \quad \forall k \in \{0, \dots, N-1\}, \nonumber      \\
                                        & x_{N} \in \SetX_N \nonumber                                                   \\
                                        & C x_N = p\,. \nonumber
    \end{align}
    Assume the stage-wise sets $\SetZ_k$ and the terminal set $\SetX_N$ are compact, solid, and convex, and that the cost functions $V$ and $l$ are continuous and real-valued everywhere. For every initial state~$x$ for which there exists a strictly feasible trajectory residing in the interior of the sets $\SetZ_k$ and $\SetX_N$, the feasible parameter set $\SetP{x}$ is a compact, solid, and convex subset of $\Reals{n}$ if and only if the terminal constraint matrix $C$ has full row rank.
\end{corollary}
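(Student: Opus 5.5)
The plan is to prove sufficiency by casting \eqref{eq:ocp_terminal} in the form \eqref{eq:ocp} and invoking \Cref{lem:compactness_of_P}. Necessity will come from a dimension count on the image of $C$.

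\textbf{Sufficiency.} Let $z = (x_0,\dots,x_N,u_0,\dots,u_{N-1})$ and take
\[
\SetZ = \Big(\textstyle\prod_{k=0}^{N-1}\SetZ_k\Big)\times\SetX_N ,
\]
with coordinates reordered as needed. This set is nonempty, compact and convex as a product of such sets, and it is solid because each factor is solid. Hence $\aff\SetZ=\Reals{n_z}$ and $\mathbb{T}=\Reals{n_z}$. There are no inequality constraints $g$, or one may take a trivial strict constraint such as $g\equiv -1$. The equalities (dynamics, $x_0=x$, and $C x_N - p = 0$) are affine, so $H_p = [0;\dots;0;-I]$.

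Assumption~1 then holds. Assumption~2 holds by continuity of $V$ and $l$. For Assumption~3, suppose $(dz,dp)$ is a recession direction of $\mathcal{S}(x)$. Compactness of $\SetZ$ forces $dz=0$, and then $dp = C\,dx_N = 0$.

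Assumption~4 is the key step. The strictly feasible trajectory gives $\tilde z\in\interior\SetZ=\relint\SetZ$, and we set $\tilde p = C\tilde x_N$. It remains to show $\range(H_p)\subseteq H_z\mathbb{T}=\range(H_z)$. Because $\mathbb{T}$ is all of $\Reals{n_z}$, I must solve $H_z\,\delta z = (0,\dots,0,\delta p)$ for every $\delta p$, i.e.\ find a perturbation with zero initial perturbation, satisfying the dynamics, and with $C\,\delta x_N=\delta p$.

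The obvious choice is $\delta u = 0$ with $\delta x_N = C^{+}\delta p$. This breaks the dynamics linking $x_{N-1}$ to $x_N$ unless $C^{+}\delta p \in \range$ of the reachable directions, which is the main obstacle. The resolution is to realise that $x_N$ lives in its own factor $\SetX_N$, and the dynamic equation for $k=N-1$ couples it to $u_{N-1}$ through $B_{N-1}$.

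My first attempt is to reread the constraint structure to see whether controllability is implicitly contained in "strictly feasible trajectory" plus solidity of $\SetZ_k$. If it is not, I would add the mild assumption that the reachable subspace from $x_0$ over $N$ steps is full, so that the map $\delta u\mapsto \delta x_N$ is onto. Then $C$ of full row rank makes $\delta u\mapsto C\,\delta x_N$ onto $\Reals{n}$, which gives Assumption~4. I suspect the authors either implicitly assume this or accept it as part of the strict-feasibility hypothesis.

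An alternative that avoids \Cref{lem:compactness_of_P} for solidity is a direct argument. Write $\SetP{x} = C\,\mathcal{R}$, where $\mathcal{R}$ is the set of feasible terminal states. $\mathcal{R}$ is compact and convex, since it is a linear image of a compact convex set, so $\SetP{x}$ is compact and convex. $\SetP{x}$ is solid whenever $\mathcal{R}$ contains an open set and $C$ is surjective, because a surjective linear map is open. Under the controllability reading this holds because $\tilde x_N$ is interior.

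\textbf{Necessity.} If $C$ lacks full row rank, then $\SetP{x}\subseteq\range(C)$, a proper subspace of $\Reals{n}$. Then $\interior\SetP{x}=\emptyset$, so $\SetP{x}$ is not solid.
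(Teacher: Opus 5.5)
Your route is the paper's route: embed \eqref{eq:ocp_terminal} in \eqref{eq:ocp} with $\SetZ=\prod_k\SetZ_k\times\SetX_N$, use solidity to get $\mathbb{T}=\Reals{n_z}$, and verify the hypotheses of \Cref{lem:compactness_of_P}. Your checks of Assumptions~1--3 and of the strictly feasible pair $(\tilde z,\tilde p)$ agree with the paper's. Where you hesitate, at the range condition, you have found a genuine defect, but it lies in the statement and in the paper's proof, not in your reasoning.

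Let $\mathcal{L}_N \triangleq \range\,[\,A_{N-1}\cdots A_1B_0,\ \dots,\ A_{N-1}B_{N-2},\ B_{N-1}\,]$ be the $N$-step reachable subspace. Solving $H_z\,\delta z=(0,\dots,0,\delta p)$ forces $\delta x_0=0$ and the homogeneous dynamics, hence $\delta x_N\in\mathcal{L}_N$. So $\range(H_p)\subseteq H_z\mathbb{T}$ holds exactly when $C\mathcal{L}_N=\Reals{n}$. The paper claims, with $H_z,H_p$ ``omitted for brevity'', that the condition reduces to $\Reals{n}\subseteq C\Reals{n_x}$; this silently replaces $\mathcal{L}_N$ by $\Reals{n_x}$.

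Strict feasibility and solidity of the sets do not supply this, so rereading the hypotheses will not help. With all $B_k=0$, the state trajectory is fixed by $x$ and can be chosen strictly interior, yet $\SetP{x}$ is a single point even for invertible $C$. A more natural example is the paper's own triple integrator with $N=1$ and $C=\begin{bmatrix} I_2 & 0\end{bmatrix}$: it gives $\SetP{x}\subseteq CAx+\range(CB)$ with $CB=(T^3/6,\,T^2/2)^\top$, a line in $\Reals{2}$. So the ``if'' direction is false as written, and your extra hypothesis is necessary, not a convenience.

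With $\mathcal{L}_N=\Reals{n_x}$ added, your proof is complete and the corollary holds verbatim as an equivalence with full row rank of $C$. The paper's triple integrator with $N=4$ is fully reachable, so its application is unaffected.

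The sharp statement is that, under strict feasibility, $\SetP{x}$ is compact, solid and convex if and only if $C\mathcal{L}_N=\Reals{n}$. Sufficiency is your argument with $\delta u$ chosen so that $C\,\delta x_N=\delta p$. Necessity follows from $\SetP{x}\subseteq C A_{N-1}\cdots A_0\,x+C\mathcal{L}_N$. Full row rank of $C$ is then only necessary.

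Your necessity proof via $\SetP{x}\subseteq\range(C)$ is also better than the paper's. The paper infers ``only if'' from failure of the range condition, but \Cref{lem:compactness_of_P} uses that condition only as a sufficient one.

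One small precision in your direct alternative: the neighborhood of $\tilde x_N$ inside the set of feasible terminal states does not come from $\tilde x_N\in\interior\SetX_N$ alone. It comes from the whole trajectory being interior together with surjectivity of $\delta u\mapsto\delta x_N$.
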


\begin{proof}
    The OCP formulation~\eqref{eq:ocp_terminal} is a direct specialization of \eqref{eq:ocp} with $z = (x_0, u_0, x_1, u_1, \dots, x_{N-1}, u_{N-1}, x_N)$. The set $\SetZ = \prod_{k} \SetZ_k \times \SetX_N$ is compact and convex (Assumption~1). Because $\SetZ$ is also solid, it is nonempty and satisfies $\relint \SetZ = \interior \SetZ$, yielding $\aff \SetZ = \Reals{n_z}$. Furthermore, the cost function satisfies Assumption~2, and the dynamics alongside the terminal constraint constitute the affine equality constraints. Because $\SetX_N$ is bounded, the parameter $p = C x_N$ must also be bounded, preventing any directions of recession in $p$ (Assumption~3).

    Finally, to verify the range condition in Assumption~4, we note that because $\aff \SetZ = \Reals{n_z}$, the subspace of allowable perturbations is $\mathbb{T} = \Reals{n_z}$. Deriving the explicit affine equality matrices $H_z$ and $H_p$ (omitted here for brevity) reveals that the range condition $\range(H_p) \subseteq H_z \mathbb{T}$ simplifies to $\Reals{n} \subseteq C \Reals{n_x}$. This inclusion holds if and only if $C$ has full row rank.
\end{proof}

\section{Invertible Mapping Between Convex Sets}
\label{sec:mapping}

As established in the previous sections, the core requirement of the FAOC framework is to map an abstract action $\bar{a} \in \SetAbar$ into the state-dependent feasible parameter set $\SetP{x}$. Both of these sets are compact, solid, and convex. Furthermore, to avoid ``action aliasing''---i.e., different actions mapping to the same parameter---this mapping must be invertible. An invertible mapping additionally allows for more flexibility in the use of auxiliary RL techniques, e.g., symmetric data augmentation as in~\cite{ace2026}, or including demonstrations in the replay buffer. Finally, because the mapping resides within the real-time closed-loop control cycle, it must be computationally efficient and scale favorably with the dimension~$n$.

In this section, we develop a family of optimization-based mapping algorithms that satisfy these requirements. To present these results as generically as possible, we momentarily abstract away from the specific RL and OCP domains. We define a compact, solid, and convex base set $\SetX \subset \Reals{n}$ (from which we map) and a corresponding target set $\SetY \subset \Reals{n}$ (into which we map). In the context of the FAOC framework, these correspond to $\SetX \triangleq \SetAbar$ and $\SetY \triangleq \SetP{x}$, respectively. Note that we temporarily drop the explicit parametric state dependence of $\SetY$ for brevity, but will reconsider it in \Cref{ssec:implicit}.

\subsection{A Generic Mapping Algorithm}
\label{ssec:generic_mapping}

\begin{algorithm}[t!]
    \small
    \caption{Invertible Mapping Between Convex Sets}\label{alg:basic}
    \begin{algorithmic}[1]
        \Require Base set~$\SetX \subset \Reals{n}$ and target set~$\SetY \subset \Reals{n}$ (both compact, solid, and convex), interior points $x_c \in \interior\SetX$ and $y_c \in \interior\SetY$, point to map $x \in \SetX$, and a continuous, invertible, positively homogeneous directional transformation $\phi: \Reals{n} \to \Reals{n}$.
        \Ensure $y = \map(x) \in \SetY$, where the mapping $\map$ is invertible.
        \If {$x = x_c$}
        \State $y \gets y_c$
        \Else
        \State $d \gets x - x_c$
        \State Compute $\alpha \ge 1$ such that $x_c + \alpha d \in \bd\SetX$
        \State $d' \gets \phi(d)$
        \State Compute $\beta > 0$ such that $y_c + \beta d' \in \bd\SetY$
        \State $y \gets y_c + \frac{\beta}{\alpha}\,d'$
        \EndIf
    \end{algorithmic}
\end{algorithm}

The foundation of our approach is the radial scaling methodology detailed in \Cref{alg:basic}. The algorithm operates by determining the boundary scaling factor of a ray originating from an interior point of the base set, applying a bijective directional transformation $\phi(\cdot)$, and proportionally scaling the transformed ray relative to an interior point of the target set. A simplified variant of this radial mapping was previously utilized in \cite{Yuan2023Action}, where the authors formally established its continuity. However, their formulation was strictly restricted to cases where the base set is a hypercube, the directional transformation is the identity ($\phi(d) = d$), and the interior point of the target set is fixed at the origin.

A critical requirement for integrating this mapping within the FAOC framework is that the transformation must be perfectly reversible. This ensures that any feasible parameter chosen by the optimal controller, or provided via expert demonstration, can be uniquely projected back into the abstract action space of the RL agent, avoiding action aliasing. We formally establish the invertibility of the proposed radial mapping in the following proposition.

\begin{proposition}[Invertibility of the Generic Mapping]\label{prop:Finv_Alg1}
    Given a continuous, invertible, and positively homogeneous directional transformation~$\phi$ and two compact, solid, and convex sets $\SetX \subset \Reals{n}$ and $\SetY \subset \Reals{n}$, \Cref{alg:basic} computes a vector-valued map $\map : \SetX \to \SetY$ such that $\map^{-1}(\map(x)) = x$ for each $x \in \SetX$ and $\map(\map^{-1}(y)) = y$ for each $y \in \SetY$, i.e., $\map$ is bijective. Here, $\map^{-1} : \SetY \to \SetX$ is the inverse mapping obtained by exchanging the roles of $(\SetX, x_c, x)$ and $(\SetY, y_c, y)$ in \Cref{alg:basic} and replacing the directional transformation $\phi$ with its inverse~$\phi^{-1}$.
\end{proposition}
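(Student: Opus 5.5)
The plan is to first make \Cref{alg:basic} well defined via the gauge (Minkowski) function of a convex body about an interior point, then verify the two compositions directly. For a compact, solid, convex set $\SetX$ with $x_c \in \interior\SetX$, define $\rho_{\SetX}(d) \triangleq \sup\{t \ge 0 \mid x_c + t d \in \SetX\}$ for $d \neq 0$. Compactness gives $\rho_{\SetX}(d) < \infty$, and $x_c \in \interior\SetX$ gives $\rho_{\SetX}(d) > 0$. By convexity, every point $x_c + t d$ with $0 \le t < \rho_{\SetX}(d)$ lies in $\interior\SetX$, since it is a strict convex combination of an interior point and a point of $\SetX$. Hence the ray meets $\bd\SetX$ in exactly one point, namely $t = \rho_{\SetX}(d)$. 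So $\alpha = \rho_{\SetX}(d)$ and $\beta = \rho_{\SetY}(d')$ are uniquely determined, and $\alpha \ge 1$ because $x = x_c + d \in \SetX$. We also have positive homogeneity, $\rho_{\SetX}(\lambda d) = \rho_{\SetX}(d)/\lambda$ for $\lambda > 0$. This uniqueness-of-boundary-crossing fact is the step I regard as the main (though standard) obstacle. Everything else is bookkeeping, and it is the only place where convexity and solidity are used.

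Next I check that $\map$ maps into $\SetY$ and is described by the quantities it produces. If $x \neq x_c$, set $d = x - x_c \neq 0$. Because $\phi$ is invertible and positively homogeneous, $\phi(0) = 0$, so $d' = \phi(d) \neq 0$. Then $y - y_c = (\beta/\alpha) d'$ with $0 < \beta/\alpha \le \beta$, so $y \in \SetY$ and $y \neq y_c$. The key observation is that the ``radial fraction'' is preserved. Define $s(x) \triangleq 1/\alpha \in (0,1]$, so that $x = x_c + s(x)\,\rho_{\SetX}(d)\, d$. Likewise $y = y_c + s\,\rho_{\SetY}(d')\,d'$ with the same $s = 1/\alpha$.

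Now I compute $\map^{-1}(\map(x))$, where $\map^{-1}$ runs the algorithm with roles swapped and $\phi^{-1}$. The case $x = x_c$ is immediate: $y = y_c$, and the inverse returns $x_c$. Otherwise the inverse forms $e = y - y_c = (\beta/\alpha) d'$. By homogeneity of the gauge, its scaling factor is $\alpha' = \rho_{\SetY}(e) = \rho_{\SetY}(d')\,\alpha/\beta = \alpha$. It then applies $\phi^{-1}$, which is positively homogeneous because $\phi$ is: from $\phi(\lambda u) = \lambda\phi(u)$ we get $\phi^{-1}(\lambda v) = \lambda \phi^{-1}(v)$. This gives $e' = \phi^{-1}(e) = (\beta/\alpha) d$. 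Its scaling factor in $\SetX$ is $\beta' = \rho_{\SetX}(e') = \rho_{\SetX}(d)\,\alpha/\beta = \alpha^2/\beta$. The output is therefore $x_c + (\beta'/\alpha') e' = x_c + (\alpha/\beta)(\beta/\alpha) d = x$. The symmetric computation, with $\phi$ and $\phi^{-1}$ exchanged, gives $\map(\map^{-1}(y)) = y$. Together these show $\map$ is a bijection with the stated inverse.
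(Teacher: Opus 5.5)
Your proposal is correct and follows the paper's proof essentially step for step: you run the same computation, $\tilde\alpha = \alpha$, then $\phi^{-1}\big(\tfrac{\beta}{\alpha}d'\big) = \tfrac{\beta}{\alpha}d$, then $\tilde\beta = \alpha^2/\beta$, which recovers $x$. Your ray-length (gauge) formalism also makes explicit several points the paper only asserts: the uniqueness of the boundary crossing, $\phi(0)=0$ (so $y\neq y_c$ whenever $x\neq x_c$), the positive homogeneity of $\phi^{-1}$, and the reverse composition $\map(\map^{-1}(y))=y$ by symmetry.
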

\begin{proof}
    For $x = x_c$, the trivial mapping $y = y_c$ is directly inverted. For $x \neq x_c$, because $x_c \in \interior\SetX$ and $\SetX$ is compact and convex, any ray originating from $x_c$ intersects the boundary $\bd\SetX$ at a unique, finite scaling factor $\alpha \ge 1$. Similarly, because $y_c \in \interior\SetY$ and $\phi(d) \neq 0$, the ray defined by $d'$ hits $\bd\SetY$ at a unique, finite $\beta > 0$. The mapped point $y$ is strictly bounded by the segment $[y_c, y_c + \beta\, d']$ because $\alpha \ge 1$, ensuring $y \in \SetY$.

    To apply the inverse mapping $\map^{-1}$ to $y$, we extract the direction $v' = y - y_c = \frac{\beta}{\alpha}\, d'$. The scaling factor $\tilde{\alpha}$ for $v'$ required to reach $\bd\SetY$ is determined as $\tilde{\alpha} = \alpha \ge 1$, since $y_c + \tilde{\alpha}\, v' = y_c + \beta\, d' \in \bd\SetY$. The inverse directional transformation yields $v = \phi^{-1}(v') = \phi^{-1}\big(\frac{\beta}{\alpha}\, \phi(d)\big)$, which results in $v = \frac{\beta}{\alpha}\, d$ since the inverse $\phi^{-1}$ is positively homogeneous. Scaling to $\bd\SetX$ requires $\tilde{\beta} = \frac{\alpha^2}{\beta}$, as $x_c + \tilde{\beta}\, v = x_c + \alpha\, d \in \bd\SetX$. Reconstructing the original point yields $x = x_c + \frac{\tilde{\beta}}{\tilde{\alpha}}\, v = x_c + \frac{\alpha}{\beta} \frac{\beta}{\alpha}\, d = x_c + d$, recovering the original~$x$.
\end{proof}

Implementing \Cref{alg:basic} requires computing the interior points ($x_c, y_c$) and evaluating the boundary scaling factors ($\alpha, \beta$). A strictly defined interior point can be obtained via standard convex optimization techniques (e.g., the analytic center or the Chebyshev center, which maximizes the radius of an inscribed ball) \cite{boyd2004convex}. The boundary scaling factors are determined by solving simple convex line-search problems along the given directions:
\begin{align*}
    \alpha = \max \{\alpha \mid x_c + \alpha\, d \in \SetX \}, \enspace \beta = \max \{\beta \mid y_c + \beta\, d' \in \SetY \}.
\end{align*}
In the prevalent case where $\SetX$ and $\SetY$ are polytopes defined by H-representations ($\SetX = \{x \mid A_x x \le b_x\}$, $\SetY = \{y \mid A_y y \le b_y\}$), the line-search problems collapse to computationally trivial ratio tests over the bounding half-spaces. For instance, the scaling factor $\alpha$ is analytically given by
\begin{equation*}
    \alpha = \min_{i \in \{j \mid a_{x, j}^\top d > 0\}} \frac{b_{x, i} - a_{x, i}^\top x_c}{a_{x, i}^\top d}\,,
\end{equation*}
where $a_{x, i}^\top$ denotes the $i$-th row of matrix $A_x$, and $b_{x, i}$ the corresponding element of $b_x$.

If the directional transformation is simply chosen as the identity function ($\phi(d) = d$), e.g., as in \cite{Yuan2023Action}, the algorithm remains topologically valid but becomes geometrically ``blind.'' Specifically, if the target set $\SetY$ possesses a significantly different aspect ratio or orientation than the base set $\SetX$, uniformly distributed points in $\SetX$ will accumulate unnaturally in the narrower regions of $\SetY$. As illustrated in the second plot of \Cref{fig:mapping_comparison}, this distortion requires an excessive number of samples during learning to adequately explore the action space. The next section proposes a specific, area-aware transformation~$\phi$ that mitigates this unwanted effect.

\begin{figure*}[!t]
    \centering
    \includegraphics[trim=0.0cm 0.0cm 0.0cm 0.0cm,clip,width=\textwidth]{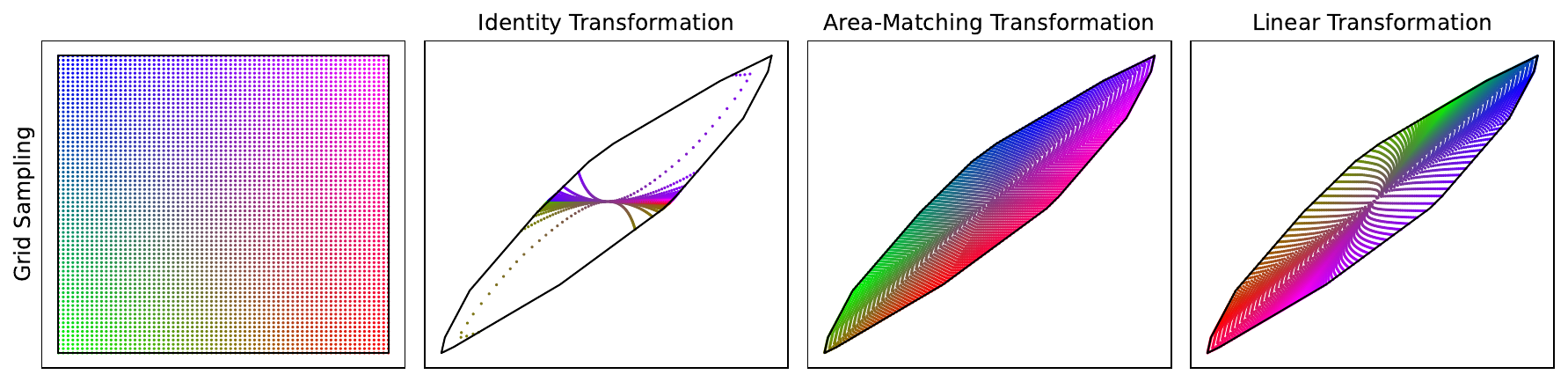}
    \caption{Visualization of the proposed mapping approaches.
        \emph{First plot:} The base set $\SetX$ (a unit square in $\Reals{2}$) is sampled using a structured grid pattern.
        \emph{Second plot:} Naive mapping utilizing an identity directional transformation ($\phi(d)=d$). This results in significant point accumulation, as the highly elongated geometry of the target set $\SetY$ differs severely from $\SetX$ (note that for visualization purposes, the x:y axes scaling ratio of the $\SetY$ plots is 1:60).
        \emph{Third plot:} Mapping utilizing the area-matching transformation $\phi$. While strictly valid only in $\Reals{2}$, this method exactly matches the marginal angular distributions of the sets' areas, perfectly mitigating point accumulation.
        \emph{Fourth plot:} Mapping utilizing an invertible linear transformation ($\phi(d) = \Phi d$). By employing affine surrogate sets to approximate geometric distortion, this approach effectively preserves the proportional density of the base distribution, thereby mitigating point accumulation in arbitrary dimensions.}
    \label{fig:mapping_comparison}
\end{figure*}

\subsection{Area-Matching Transformation in 2D}\label{ssec:area_matching}

To mitigate point accumulation, the directional transformation~$\phi$ must account for the geometric disparities between the base and target sets. In this section, we derive a transformation~$\phi$ for $n=2$, which matches the marginal angular distributions of the sets' areas, thereby mitigating point accumulation under the mapping~$\map$. To formalize this geometric relationship, we construct the following area-based measure.
\begin{definition}[Normalized Area Function]
    Let $\SetX \subset \Reals{2}$ be a compact, solid, and convex set, $A_{\SetX} = \int_{\SetX}\, dx$ be its total area, and $x_c$ be an interior point. For any angle $\xi \in [0, 2\pi]$, we define the angular wedge:
    \begin{align*}
        \SetX(\xi) \triangleq \SetX \cap \bigg\{ x_c + \mu \begin{bmatrix} \cos(\theta) \\ \sin(\theta) \end{bmatrix} \mid \mu \ge 0, \, \theta \in [0, \xi] \bigg\}\,.
    \end{align*}
    The normalized area function is then $a_{\SetX}(\xi) \triangleq A_{\SetX(\xi)} / A_{\SetX}$.
\end{definition}

This function acts as a marginal Cumulative Distribution Function (CDF) representing a 2D uniform distribution over $\SetX$ with respect to the polar angle originating from $x_c$. Specifically, $a_{\SetX}(0) = 0$, $a_{\SetX}(2\pi) = 1$, and if a random vector $X$ is distributed uniformly over $\SetX$, then $\Prob(X \in \SetX(\xi)) = a_{\SetX}(\xi)$. Furthermore, because $\SetX$ is solid and convex, $a_{\SetX}$ is continuous and strictly increasing.

To construct a transformation that accounts for geometric disparities, we define $\phi: \Reals{2} \to \Reals{2}$ using the normalized area functions of the respective sets. First, we recall a standard result regarding the invertibility of such functions.

\begin{lemma}[Monotone Invertibility {\cite[Chap.~12, Thm.~3]{spivak2008calculus}}]
    \label{lem:strincr}
    Let $f: [a, b] \to \Reals{}$ be a strictly increasing and continuous function. Then its inverse function $f^{-1}$ exists, is strictly increasing, and is continuous on the domain $[f(a), f(b)]$.
\end{lemma}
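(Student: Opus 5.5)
The plan is the standard real-analysis argument: injectivity and surjectivity first, then monotonicity of the inverse, and finally continuity via the intermediate value theorem.

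\emph{Existence of the inverse.} Strict monotonicity makes $f$ injective on $[a,b]$, since $x<y$ implies $f(x)<f(y)$. Because $f$ is increasing, every value satisfies $f(a)\le f(x)\le f(b)$, so $f([a,b])\subseteq [f(a),f(b)]$. Conversely, for any $c\in[f(a),f(b)]$, the intermediate value theorem, applicable because $f$ is continuous, gives some $x\in[a,b]$ with $f(x)=c$. Hence $f$ is a bijection from $[a,b]$ onto $[f(a),f(b)]$, and $f^{-1}:[f(a),f(b)]\to[a,b]$ is well defined.

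\emph{Monotonicity of the inverse.} Take $c<d$ in $[f(a),f(b)]$ and suppose $f^{-1}(c)\ge f^{-1}(d)$. Applying the increasing function $f$ gives $c\ge d$, a contradiction. So $f^{-1}$ is strictly increasing.

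\emph{Continuity of the inverse.} This is the step needing the most care, especially at the endpoints. Fix $c_0=f(x_0)$ and $\varepsilon>0$. Consider first the interior case $a<x_0<b$, and shrink $\varepsilon$ so that $[x_0-\varepsilon,x_0+\varepsilon]\subseteq[a,b]$. Then
\[
f(x_0-\varepsilon)<c_0<f(x_0+\varepsilon).
\]
Set $\delta=\min\{c_0-f(x_0-\varepsilon),\,f(x_0+\varepsilon)-c_0\}>0$. For any $c$ with $|c-c_0|<\delta$, we have $f(x_0-\varepsilon)<c<f(x_0+\varepsilon)$. Since $f^{-1}$ is increasing, this gives $x_0-\varepsilon<f^{-1}(c)<x_0+\varepsilon$. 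At an endpoint, for example $x_0=a$, the same argument works using only the one-sided bound $f(a+\varepsilon)$, because $c$ is restricted to the domain $[f(a),f(b)]$. The case $x_0=b$ is symmetric. This establishes continuity of $f^{-1}$ everywhere on $[f(a),f(b)]$.

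Alternatively, continuity follows at once from compactness: $f$ is a continuous bijection from the compact space $[a,b]$ onto a Hausdorff space, so it is a homeomorphism.
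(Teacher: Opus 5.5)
Your proof is correct and takes essentially the same route as the paper, which gives no proof of its own and defers to the cited textbook result. That textbook argument is the same: the intermediate value theorem gives surjectivity onto $[f(a),f(b)]$, and continuity of the inverse follows from the $\varepsilon$--$\delta$ bound with $\delta=\min\{c_0-f(x_0-\varepsilon),\,f(x_0+\varepsilon)-c_0\}$, with your compactness remark an equally valid shortcut.
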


Because the normalized area function $a_{\SetY}(\xi)$ is continuous and strictly increasing on $[0, 2\pi]$, its inverse $a_{\SetY}^{-1}$ is well-defined on $[0,1]$ due to \Cref{lem:strincr}. This allows us to define the specific directional transformation for the 2D area-matching approach.

\begin{definition}[Area-Matching Directional Transformation]
    \label{def:uniform_phi}
    For any base direction $d \in \Reals{2}$, let $\xi = \arg(d)$ be its polar angle. The area-matching directional transformation $\phi: \Reals{2} \to \Reals{2}$ is defined as:
    \begin{equation}\label{eq:uniform_phi}
        \phi(d) \triangleq \|d\| \begin{bmatrix} \cos(\upsilon) \\ \sin(\upsilon) \end{bmatrix}, \quad \text{where} \quad \upsilon = a_{\SetY}^{-1}(a_{\SetX}(\xi))\,.
    \end{equation}
\end{definition}

In order to guarantee that integrating \Cref{def:uniform_phi} into \Cref{alg:basic} yields a valid and invertible mapping between $\SetX$ and $\SetY$, the transformation $\phi$ must satisfy the topological conditions established in \Cref{prop:Finv_Alg1}.

\begin{proposition}[Area-Matching Transformation Properties]
    \label{prop:phi_properties}
    The area-matching directional transformation~$\phi$ defined in \eqref{eq:uniform_phi} is continuous, invertible, and positively homogeneous.
\end{proposition}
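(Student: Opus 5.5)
The plan is to write $\phi$ in polar coordinates as $\phi(r,\xi) = (r, \psi(\xi))$, where $\psi \triangleq a_{\SetY}^{-1}\circ a_{\SetX}$ acts on angles. We then check the three properties one at a time. For $d=0$ the angle is undefined, so we set $\phi(0)=0$. This is consistent with the factor $\|d\|$ in \eqref{eq:uniform_phi}.

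\emph{Positive homogeneity} is immediate. For $\lambda>0$ we have $\arg(\lambda d)=\arg(d)$, so the angle $\upsilon$ does not change, and $\|\lambda d\| = \lambda\|d\|$. Hence $\phi(\lambda d)=\lambda\phi(d)$, and the case $d=0$ is trivial.

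\emph{Invertibility.} Both normalized area functions are continuous and strictly increasing bijections $[0,2\pi]\to[0,1]$. By \Cref{lem:strincr}, $a_{\SetY}^{-1}$ and $a_{\SetX}^{-1}$ exist and are continuous and strictly increasing. Thus $\psi$ is a strictly increasing continuous bijection of $[0,2\pi]$ onto itself with $\psi(0)=0$ and $\psi(2\pi)=2\pi$, and its inverse is $\psi^{-1}=a_{\SetX}^{-1}\circ a_{\SetY}$. The candidate inverse is therefore $\phi^{-1}(d') = \|d'\|\,[\cos\omega,\ \sin\omega]^\top$ with $\omega = a_{\SetX}^{-1}(a_{\SetY}(\arg d'))$. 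Because $\phi$ preserves norms, composing the two maps in either order returns the original radius and angle. This shows $\phi$ is a bijection of $\Reals{2}$.

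\emph{Continuity} is the step needing care, for two reasons. First, $\arg$ is discontinuous across the ray $\xi=0\equiv 2\pi$. I will handle this by viewing $\psi$ as a map of the circle $\mathbb{S}^1=[0,2\pi]/\{0\sim2\pi\}$. Since $\psi(0)=0$ and $\psi(2\pi)=2\pi$, the induced map $\bar\psi:\mathbb{S}^1\to\mathbb{S}^1$ is well defined and continuous. Concretely, if $d_k\to d\neq0$ with $\arg d$ on the cut, the angles $\arg d_k$ accumulate only near $0$ or near $2\pi$, and $\psi$ sends both to angles near $0\equiv2\pi$. So $[\cos\psi,\sin\psi]^\top$ composed with the direction $d/\|d\|$ is continuous on $\Reals{2}\setminus\{0\}$. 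Second, at the origin, $\|\phi(d)\| = \|d\|\to 0$, which gives continuity there.

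The same argument applied to $\psi^{-1}$ shows $\phi^{-1}$ is continuous, so $\phi$ is in fact a homeomorphism. The main obstacle I expect is justifying rigorously the premise, stated in the text, that $a_{\SetX}$ is continuous and strictly increasing. Strict increase follows because every wedge of positive angular width around an interior point $x_c$ contains a small sector of a ball inside $\SetX$. Continuity follows because the wedge area equals $\int_0^\xi \tfrac12 \rho(\theta)^2\,d\theta$ with a bounded radial function $\rho$. I would state this briefly and rely on it.
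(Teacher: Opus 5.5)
Your proposal is correct and takes the same route as the paper. Both write $\phi$ in polar form, use \Cref{lem:strincr} to show the angular map $a_{\SetY}^{-1}\circ a_{\SetX}$ is a continuous strictly increasing bijection, combine this with norm preservation to get bijectivity and continuity, and get positive homogeneity from the invariance of the polar angle under positive scaling. Your explicit handling of the wrap-around at $0\equiv 2\pi$ (via $\psi(0)=0$, $\psi(2\pi)=2\pi$), of the origin, and of the explicit inverse adds rigor to details the paper's proof leaves implicit.
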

\begin{proof}
    By definition, the normalized area functions $a_{\SetX}$ and $a_{\SetY}$ are continuous and strictly increasing in the interval $[0, 2\pi]$. By \Cref{lem:strincr}, $a_{\SetY}^{-1}$ is also continuous and strictly increasing. Since the composition of strictly increasing continuous functions is strictly increasing and continuous, the angular mapping $\upsilon(\xi)$ is a continuous bijection on $[0, 2\pi)$. Combined with the norm preservation ($\|d'\| = \|d\|$), the overall transformation $\phi: \Reals{2} \to \Reals{2}$ is continuous and bijective. Furthermore, let $v = c d$, $c > 0$. Scaling a vector by a positive scalar preserves its polar angle, i.e., $\xi' = \arg(c d) = \xi$. Because $\xi$ is invariant under positive scaling, the mapped target angle $\upsilon$ remains unchanged. Applying the definition of $\phi$ to the scaled vector yields:
    \begin{equation*}
        \phi(c d) = \|c d\| \begin{bmatrix} \cos(\upsilon) \\ \sin(\upsilon) \end{bmatrix} = c \, \|d\| \begin{bmatrix} \cos(\upsilon) \\ \sin(\upsilon) \end{bmatrix} = c \phi(d)\,.
    \end{equation*}
    Thus, $\phi$ is positively homogeneous.
\end{proof}

Having established that integrating this area-matching transformation into the generic mapping algorithm preserves strict invertibility (due to the established properties of $\phi$ and \Cref{prop:Finv_Alg1}), we now formalize its primary objective: mitigating point accumulation. We demonstrate that if the abstract action space is uniformly sampled, the resulting mapped parameters are distributed such that they perfectly respect the area proportions of the target set.

\begin{proposition}[Preservation of Marginal Area Proportions]\label{prop:marginal}
    Assume that a random vector $X$ is distributed uniformly over the compact, solid, and convex base set $\SetX \subset \Reals{2}$. Thus, its marginal angular CDF matches the normalized area function:
    \begin{equation*}
        P_{\SetX}(\xi) \triangleq \Prob(X \in \SetX(\xi)) = a_{\SetX}(\xi) \text{ for } \xi \in [0, 2\pi]\,.
    \end{equation*}
    Applying the mapping algorithm with the directional transformation $\phi$ defined by \eqref{eq:uniform_phi} yields a mapped random vector $Y = \map(X) \in \SetY$ whose marginal angular CDF identically matches the normalized area function of the target set, i.e., $P_{\SetY}(\upsilon) = a_{\SetY}(\upsilon)$.
\end{proposition}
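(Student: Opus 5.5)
The plan is to show that the mapping $\map$ built from \Cref{alg:basic} with the area-matching $\phi$ of \Cref{def:uniform_phi} sends each wedge $\SetX(\xi)$ exactly onto the wedge $\SetY(\upsilon(\xi))$, where $\upsilon(\xi)=a_{\SetY}^{-1}(a_{\SetX}(\xi))$ and wedges are measured from $x_c$ and $y_c$ respectively. Once this holds, the event $\{Y\in\SetY(\upsilon)\}$ coincides with $\{X\in\SetX(\upsilon^{-1}(\upsilon))\}$. The claim then follows from the assumed identity $P_{\SetX}=a_{\SetX}$ and the definition of $\upsilon$.

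\textbf{Step 1 (angles are mapped monotonically and radially).} For $x\neq x_c$ with $d=x-x_c$ and $\xi=\arg(d)$, \Cref{alg:basic} outputs $y-y_c=\tfrac{\beta}{\alpha}\phi(d)$ with $\beta/\alpha>0$. Hence $\arg(y-y_c)=\arg(\phi(d))=\upsilon(\xi)$. The polar angle of $Y$ about $y_c$ is therefore a deterministic function of the polar angle of $X$ about $x_c$. The center $x_c$ goes to $y_c$ and forms a null set, so it does not matter.

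\textbf{Step 2 (monotonicity gives event equivalence).} As noted in the proof of \Cref{prop:phi_properties}, $\upsilon$ is a continuous, strictly increasing bijection of $[0,2\pi]$ onto itself, with $\upsilon(0)=0$ and $\upsilon(2\pi)=2\pi$. This uses $a_{\SetX}(0)=a_{\SetY}(0)=0$, $a_{\SetX}(2\pi)=a_{\SetY}(2\pi)=1$, and \Cref{lem:strincr}. Consequently, for $\upsilon_0\in[0,2\pi]$,
\[
\arg(Y-y_c)\in[0,\upsilon_0]\iff \arg(X-x_c)\in[0,\upsilon^{-1}(\upsilon_0)].
\]
By Step~1 and bijectivity of $\map$ (\Cref{prop:Finv_Alg1}), this gives $\{Y\in\SetY(\upsilon_0)\}=\{X\in\SetX(\upsilon^{-1}(\upsilon_0))\}$ up to the null sets formed by the center points.

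\textbf{Step 3 (conclude).} It follows that
\[
P_{\SetY}(\upsilon_0)=P_{\SetX}(\upsilon^{-1}(\upsilon_0))=a_{\SetX}(a_{\SetX}^{-1}(a_{\SetY}(\upsilon_0)))=a_{\SetY}(\upsilon_0).
\]
Here $\upsilon^{-1}=a_{\SetX}^{-1}\circ a_{\SetY}$ is well defined because $a_{\SetX}$ is also strictly increasing and continuous.

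The main obstacle is the bookkeeping at the angular branch cut $\xi=0\equiv2\pi$. The ray at angle $0$ is counted in both $\SetX(0)$ and $\SetX(2\pi)$, and $\arg$ must be fixed to take values in $[0,2\pi)$. I would handle this by noting that rays and the center have Lebesgue measure zero, so under the uniform law of $X$ they carry no probability. Their images are likewise single rays or points, which are null under $Y$'s law, so closed versus half-open wedge conventions do not affect the CDFs. I would also note explicitly that only the angular component of $\map$ matters: the radial scaling by $\beta/\alpha$ is irrelevant to the marginal angular distribution.
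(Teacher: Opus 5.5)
Your proof is correct and follows the paper's route: each wedge $\SetX(\xi)$ about $x_c$ is carried onto the wedge $\SetY(\upsilon(\xi))$ about $y_c$, so $P_{\SetY}(\upsilon)=P_{\SetX}\big(a_{\SetX}^{-1}(a_{\SetY}(\upsilon))\big)=a_{\SetY}(\upsilon)$. Your version spells out what the paper leaves implicit: that $\arg(Y-y_c)=\upsilon(\arg(X-x_c))$ because $\beta/\alpha>0$, that strict monotonicity of $\upsilon$ turns wedge events into wedge events, and that the center and the branch-cut ray are negligible.
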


\begin{proof}
    By the geometric nature of the radial scaling algorithm, probability mass along any angular wedge originating from $x_c \in \SetX$ is bijectively mapped to the corresponding angular wedge originating from $y_c \in \SetY$. Thus, the mapped marginal CDF satisfies $P_{\SetY}(\upsilon) = P_{\SetX}(\xi(\upsilon))$.

    Substituting the inverse angular relation $\xi(\upsilon) = a_{\SetX}^{-1}(a_{\SetY}(\upsilon))$ defined in \eqref{eq:uniform_phi}, and applying the uniform premise $P_{\SetX}(\xi) \equiv a_{\SetX}(\xi)$, yields:
    \begin{equation*}
        P_{\SetY}(\upsilon) = a_{\SetX}\Big( a_{\SetX}^{-1}\big(a_{\SetY}(\upsilon)\big) \Big) = a_{\SetY}(\upsilon)\,.
    \end{equation*}
    Because $a_{\SetY}(\upsilon)$ is exactly the marginal angular CDF of a uniform distribution over $\SetY$, the mapping successfully preserves the marginal area proportions of the target set.
\end{proof}

It is important to note that while this transformation perfectly matches the marginal distributions to mitigate excessive point accumulation, it does not guarantee a strictly uniform joint distribution over the entirety of $\SetY$. Specifically, the distribution of mapped points along individual radial lines can still vary depending on the geometric curvature of the target boundary.

From a computational perspective, explicitly evaluating the normalized area function and its inverse may appear expensive due to the trigonometric functions and integrations involved. However, when $\SetX$ and $\SetY$ are convex polygons (e.g., defined by counter-clockwise ordered vertex sets), $a_{\SetX}(\xi)$ and $a_{\SetY}^{-1}(a_{\SetX}(\xi))$ can be computed analytically using sequential signed triangular area calculations. This geometric approach completely circumvents explicit angle computation via trigonometric evaluations, rendering the 2D area-matching transformation highly efficient for real-time applications. As illustrated in \Cref{fig:mapping_comparison} in the third plot, utilizing this area-matching transformation effectively eliminates point accumulation.

While the area-matching transformation resolves point accumulation exactly in 2D, its reliance on exact volumetric integration renders it computationally intractable for arbitrary dimensions $n > 2$. We now present an approximate, scalable alternative based on a linear transformation.

\subsection{Linear Transformation in Arbitrary Dimensions}\label{ssec:linear_mapping}

In order to motivate our approach for dimensions $n>2$, consider a random vector $X$ distributed over $\SetX \subset \Reals{n}$ with a Probability Density Function (PDF) $p_{\SetX}$. Under an invertible affine map $Y = \Phi X + c_\Phi$, the PDF of the mapped random vector $Y$ is given by $p_{\SetY}(y) = |{\det \Phi}|^{-1} p_{\SetX}\big(\Phi^{-1}(y - c_\Phi)\big)$. This fundamental property dictates that an affine mapping perfectly preserves the geometric nature (i.e., the ``type'') of the probability distribution.

Consequently, we aim to restrict our directional transformation to be an invertible linear map, $\phi(d) = \Phi d$, where the matrix $\Phi \in \Reals{n \times n}$ is chosen to approximate the geometric distortion between $\SetX$ and $\SetY$. Because any invertible linear map is inherently continuous, bijective, and positively homogeneous, this choice strictly satisfies the topological requirements of \Cref{prop:Finv_Alg1}.

To formally construct the transformation matrix $\Phi$, we rely on the concept of \emph{affinely related surrogate sets}.

\begin{assumption}[Affinely Related Surrogate Sets]\label{assum:surrogate}
    Let $\widetilde{\SetX}$ be a solid and convex surrogate set that approximates the geometric proportions of the base set $\SetX$. Suppose we are given an invertible affine mapping defined by a matrix $\Phi \in \Reals{n \times n}$ alongside an offset vector $c_\Phi \in \Reals{n}$. The image of the base surrogate under this mapping produces a solid and convex set, $\widetilde{\SetY} = \Phi\widetilde{\SetX} + c_\Phi$, acting as a surrogate that captures the geometric proportions of the target set $\SetY$.
\end{assumption}

If \Cref{assum:surrogate} holds, substituting $\phi(d) = \Phi d$ into \Cref{alg:basic} ensures that the radial scaling algorithm distributes points in a manner that closely mimics the target set's geometry, effectively mitigating point accumulation. The challenge remains to systematically compute $\Phi$. A highly practical choice for the surrogate sets $\widetilde{\SetX}$ and $\widetilde{\SetY}$ is to define them as solid ellipsoids. This geometric structure admits a direct analytical computation of the linear transformation matrix $\Phi$.

\begin{proposition}[Ellipsoidal Surrogates]\label{prop:ellipsoid_surrogate}
    Let the surrogate sets $\widetilde{\SetX}$ and $\widetilde{\SetY}$ be solid ellipsoids, characterized by centers $q_{\widetilde{\SetX}}$, $q_{\widetilde{\SetY}}$ and invertible shape matrices $Q_{\widetilde{\SetX}}$, $Q_{\widetilde{\SetY}}$, respectively. The linear transformation matrix satisfying \Cref{assum:surrogate} is given by $\Phi = Q_{\widetilde{\SetY}} Q_{\widetilde{\SetX}}^{-1}$.
\end{proposition}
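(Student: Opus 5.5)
Fix the convention that a solid ellipsoid with center $q$ and invertible shape matrix $Q$ is the affine image of the closed Euclidean unit ball $\SetB = \{u \in \Reals{n} \mid \|u\| \le 1\}$:
\begin{equation*}
    \widetilde{\SetX} = \{ q_{\widetilde{\SetX}} + Q_{\widetilde{\SetX}} u \mid u \in \SetB \}, \qquad
    \widetilde{\SetY} = \{ q_{\widetilde{\SetY}} + Q_{\widetilde{\SetY}} u \mid u \in \SetB \}.
\end{equation*}
If the paper instead uses the quadratic-form representation $\{x \mid (x-q)^\top E^{-1}(x-q) \le 1\}$, take $Q$ to be any square root with $E = QQ^\top$ (e.g.\ the Cholesky or symmetric root). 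The argument below then applies unchanged.

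The proof exhibits the affine map directly. The map $x \mapsto Q_{\widetilde{\SetX}}^{-1}(x - q_{\widetilde{\SetX}})$ is a bijection from $\widetilde{\SetX}$ onto $\SetB$, since $Q_{\widetilde{\SetX}}$ is invertible. Composing it with $u \mapsto q_{\widetilde{\SetY}} + Q_{\widetilde{\SetY}} u$, which maps $\SetB$ bijectively onto $\widetilde{\SetY}$, gives
\begin{equation*}
    x \mapsto Q_{\widetilde{\SetY}} Q_{\widetilde{\SetX}}^{-1} x + \big(q_{\widetilde{\SetY}} - Q_{\widetilde{\SetY}} Q_{\widetilde{\SetX}}^{-1} q_{\widetilde{\SetX}}\big).
\end{equation*}
Set $\Phi = Q_{\widetilde{\SetY}} Q_{\widetilde{\SetX}}^{-1}$ and $c_\Phi = q_{\widetilde{\SetY}} - \Phi q_{\widetilde{\SetX}}$. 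Then $\Phi \widetilde{\SetX} + c_\Phi = \widetilde{\SetY}$. To check this, substitute $x = q_{\widetilde{\SetX}} + Q_{\widetilde{\SetX}} u$: the image is $\Phi x + c_\Phi = q_{\widetilde{\SetY}} + Q_{\widetilde{\SetY}} u$, and $u$ ranges over all of $\SetB$. Hence the image equals $\widetilde{\SetY}$ exactly, not merely a subset of it.

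It remains to verify the remaining hypotheses of \Cref{assum:surrogate}. $\Phi$ is invertible because it is a product of invertible matrices, with $\det \Phi = \det Q_{\widetilde{\SetY}} / \det Q_{\widetilde{\SetX}} \neq 0$. Both surrogates are convex, as affine images of the convex set $\SetB$. They are solid, because an invertible affine map sends the interior of $\SetB$ onto an open nonempty set. They are compact, as continuous images of a compact set.

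The only real obstacle is the non-uniqueness of shape matrices: $Q$ and $QR$, for orthogonal $R$, describe the same ellipsoid. I will therefore state that \emph{any} valid choice of $Q_{\widetilde{\SetX}}, Q_{\widetilde{\SetY}}$ gives a $\Phi$ satisfying \Cref{assum:surrogate}, since the assumption only requires existence of such a map. Different choices yield different but equally admissible $\Phi$, and each one inherits the topological properties needed by \Cref{prop:Finv_Alg1}.
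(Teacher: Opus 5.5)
Your proof is correct and follows essentially the same route as the paper. Both map $x$ to the unit ball via $Q_{\widetilde{\SetX}}^{-1}(x - q_{\widetilde{\SetX}})$, push the result forward through $Q_{\widetilde{\SetY}}$, read off $\Phi = Q_{\widetilde{\SetY}} Q_{\widetilde{\SetX}}^{-1}$ and $c_\Phi = q_{\widetilde{\SetY}} - \Phi q_{\widetilde{\SetX}}$, and conclude that $\Phi$ is invertible as a product of invertible matrices. Your explicit check that the image is all of $\widetilde{\SetY}$, and your remark that shape matrices are only unique up to an orthogonal factor, are small additions the paper leaves implicit.
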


\begin{proof}
    The solid ellipsoidal surrogates are explicitly defined as:
    \begin{align*}
        \widetilde{\SetX} & = \{ x \in \Reals{n} \mid x = Q_{\widetilde{\SetX}} u + q_{\widetilde{\SetX}}, \enspace \|u\|_2 \le 1 \} \,, \\
        \widetilde{\SetY} & = \{ y \in \Reals{n} \mid y = Q_{\widetilde{\SetY}} v + q_{\widetilde{\SetY}}, \enspace \|v\|_2 \le 1 \} \,.
    \end{align*}
    To construct the affine map connecting the surrogate sets, we isolate the unit vector. Because the ellipsoids are solid, their shape matrices are invertible. For any $x \in \widetilde{\SetX}$, the corresponding unit vector is uniquely given by $u = Q_{\widetilde{\SetX}}^{-1}(x - q_{\widetilde{\SetX}})$. Substituting $u$ into the definition of $\widetilde{\SetY}$ as a feasible vector $v$ yields the corresponding mapped point $y \in \widetilde{\SetY}$:
    \begin{equation*}
        y = Q_{\widetilde{\SetY}} Q_{\widetilde{\SetX}}^{-1}(x - q_{\widetilde{\SetX}}) + q_{\widetilde{\SetY}}\,.
    \end{equation*}
    This establishes the affine point-wise mapping $y = \Phi x + c_\Phi$, where the linear transformation matrix is $\Phi = Q_{\widetilde{\SetY}} Q_{\widetilde{\SetX}}^{-1}$ and the offset is $c_\Phi = q_{\widetilde{\SetY}} - \Phi q_{\widetilde{\SetX}}$. Because $\Phi$ is the product of two invertible matrices, it is also invertible, thus fulfilling all conditions of \Cref{assum:surrogate}.
\end{proof}

In the prevalent case where the true sets $\SetX$ and $\SetY$ are explicitly known polytopes defined by H-representations, an effective way to obtain these ellipsoidal surrogates is by computing their Maximum Volume Inscribed Ellipsoids (MVIEs). This can be achieved efficiently by solving standard convex Semidefinite Programs (SDPs)~\cite{boyd2004convex}.

Utilizing this transformation matrix $\Phi$ to warp the search directions approximately aligns the mapped distribution with the geometric proportions of the target set. As illustrated in the fourth plot of \Cref{fig:mapping_comparison}, this approach drastically reduces point accumulation and approximately preserves the proportional density of the base distribution, all without the computational burden of multidimensional integration. However, constructing $\Phi$ requires explicitly defining the surrogate set $\widetilde{\SetY}$, which necessitates full geometric knowledge of the target set $\SetY$. In hierarchical control schemes like FAOC, the state-dependent feasible parameter set is frequently defined only \emph{implicitly} via the OCP constraints. The following section extends this linear transformation methodology to implicitly defined, state-dependent target sets, circumventing the need for computationally prohibitive explicit geometric representations and thereby enabling real-time applicability.

\subsection{Mapping for Implicitly Defined Target Sets}\label{ssec:implicit}

To efficiently execute the linear transformation mapping without relying on an explicit geometric representation of the target set, we consider the case where it is defined as the affine image of a compact and convex set $\SetW \subset \Reals{n_w}$ with dimension $n_w > n$, i.e.,
\begin{equation}\label{eq:implicit_Y}
    \SetY = E \SetW + e\,,
\end{equation}
where $E \in \Reals{n \times n_w}$ is a projection matrix and $e \in \Reals{n}$ is an offset vector. By \eqref{eq:implicit_Y}, the set $\SetY$ is inherently compact and convex. However, for the mapping in \Cref{alg:basic} to apply, $\SetY$ must also be solid. To guarantee this topological property, we make the following assumption.

\begin{assumption}[Affine Hull Projection]\label{ass:implicit}
    The image of the affine hull of $\SetW$ under $E$ spans the entire target space, i.e., $E(\aff \SetW) = \Reals{n}$.
\end{assumption}

A necessary consequence of this assumption is that the matrix~$E$ must possess full row rank. Under this rank condition, a sufficient (though not necessary) condition for \Cref{ass:implicit} to hold is that $\SetW$ is solid.

To relate this generic geometric setup to the OCP~\eqref{eq:ocp} previously discussed, we provide the following example.

\begin{example}[Implicit Target Set for Generic OCP]
    \label{ex:ocp_impl}
    Consider the generic parameterized OCP defined in~\eqref{eq:ocp}. To map this formulation into the structure of \eqref{eq:implicit_Y}, we must isolate the parameter $p$. First, we assume that the inequality constraints are independent of $p$, such that $g(z, p, x) \equiv g(z, x) \le 0$.

    Second, we partition the affine equality constraints into those that are independent of $p$ (e.g., system dynamics) and those that explicitly define $p$ (e.g., a terminal state target). Let the parameter-independent equalities be denoted by $A_z z + A_x x + a = 0$. To match the proposed implicit structure, we assume the parameter-dependent constraints can be rearranged into the form $B_p p = B_z z + B_x x + b$, where the square matrix $B_p \in \Reals{n \times n}$ is invertible. This allows us to isolate the parameter as $p = B_p^{-1}(B_z z + B_x x + b)$, which implies that any equality constraint involving $p$ acts purely to define the parameter space without imposing additional hidden restrictions on $z$. In prevalent formulations for linear MPC, e.g., \eqref{eq:ocp_terminal}, where $p = C x_N$, this natural separation holds exactly, with $B_p$ simply being the identity matrix.

    We now define $\SetW$ as the state-dependent set of all decision variables satisfying the parameter-independent constraints, i.e.,
    \begin{equation*}
        \SetW(x) \triangleq \{ z \in \SetZ \mid g(z, x) \le 0, \, A_z z + A_x x + a = 0 \}\,.
    \end{equation*}
    Note that since this set contains explicit equality constraints, it generally possesses an empty interior in $\Reals{n_z}$.

    By definition, a parameter $p$ belongs to the state-dependent feasible parameter set $\SetP{x}$ of the OCP if and only if there exists a $z \in \SetW(x)$ such that $p = B_p^{-1}(B_z z + B_x x + b)$. Thus, the parameter set takes the form $\SetP{x} = E \SetW(x) + e(x)$, where the state-independent projection matrix is $E = B_p^{-1} B_z$, and the state-dependent offset is $e(x) = B_p^{-1}(B_x x + b)$. Provided the original OCP satisfies the assumptions of \Cref{lem:compactness_of_P} at the initial state~$x$, the resulting parameter set $\SetP{x}$ is guaranteed to be compact, solid, and convex.
\end{example}

Having established the implicit definition of the target set~$\SetY$ in \eqref{eq:implicit_Y}, we now demonstrate how to execute \Cref{alg:basic} by computing its three required components---an interior point $y_c \in \interior\SetY$, the linear transformation matrix~$\Phi$, and the boundary scaling factor~$\beta$---directly from $\SetW$ and the affine mapping $(E, e)$, completely circumventing the need to extract a computationally demanding explicit representation of $\SetY$.

\subsubsection{Computing an Interior Point}
The first requirement of \Cref{alg:basic} is to locate an interior point $y_c$ within the implicitly defined target set $\SetY$. The following proposition establishes that this can be achieved purely by projecting a relative interior point of the higher-dimensional base set $\SetW$, circumventing any geometric operations in the target space.

\begin{proposition}[Interior Point of an Implicitly Defined Set]\label{prop:implicit_interior}
    Given a convex set $\SetW \subset \Reals{n_w}$ and a point $w_c \in \relint\SetW$, the mapped point $y_c = E w_c + e$ is an interior point of $\SetY = E \SetW + e$, provided \Cref{ass:implicit} holds.
\end{proposition}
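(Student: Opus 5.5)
The plan is to show that the affine map $w \mapsto Ew + e$ sends the relative interior of $\SetW$ onto the interior of $\SetY$. This follows from \cite[Theorem~6.6]{rockafellar1970convex}: for any convex set $C$ and linear map $A$, $\relint(AC) = A\,\relint C$. Translation by $e$ commutes with taking relative interiors, so applying this with $A = E$ and $C = \SetW$ gives
\begin{equation*}
    \relint\SetY = E\,\relint\SetW + e \ni E w_c + e = y_c\,.
\end{equation*}
It then remains to show $\relint\SetY = \interior\SetY$, which holds exactly when $\aff\SetY = \Reals{n}$.

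For that step I will use \Cref{ass:implicit}. Affine maps commute with affine hulls, so $\aff\SetY = E(\aff\SetW) + e$. By \Cref{ass:implicit}, $E(\aff\SetW) = \Reals{n}$, and a translate of $\Reals{n}$ is still $\Reals{n}$. Hence $\aff\SetY = \Reals{n}$, so relative interior and interior coincide and $y_c \in \interior\SetY$.

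A self-contained alternative avoids Theorem~6.6 and mirrors the solidity argument in \Cref{lem:compactness_of_P}. Let $\mathbb{L} = \aff\SetW - w_c$ be the subspace parallel to $\aff\SetW$. Since $w_c \in \relint\SetW$, there is $\varepsilon > 0$ with $w_c + \delta w \in \SetW$ for all $\delta w \in \mathbb{L}$ with $\|\delta w\| \le \varepsilon$. \Cref{ass:implicit} gives $E\mathbb{L} = \Reals{n}$, so the restriction $E|_{\mathbb{L}}$ is surjective. It therefore has a linear right inverse $R:\Reals{n} \to \mathbb{L}$, for example built from the pseudoinverse of $E$ composed with a basis of $\mathbb{L}$.

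With $R$ in hand, take any $\delta y$ with $\|\delta y\| \le \varepsilon/\|R\|$. Setting $\delta w = R\,\delta y$ gives $\|\delta w\| \le \varepsilon$, so $w_c + \delta w \in \SetW$. Its image is $E(w_c + R\,\delta y) + e = y_c + \delta y \in \SetY$, so a full ball around $y_c$ lies in $\SetY$.

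The main point needing care is the claim $E(\aff\SetW) = \Reals{n} \Rightarrow E\mathbb{L} = \Reals{n}$. Writing $\aff\SetW = w_c + \mathbb{L}$ gives $E(\aff\SetW) = Ew_c + E\mathbb{L}$. The translate of a linear subspace equals $\Reals{n}$ only if the subspace is $\Reals{n}$, so $E\mathbb{L} = \Reals{n}$ follows. Once that is settled, the existence of the bounded right inverse $R$ is routine finite-dimensional linear algebra.
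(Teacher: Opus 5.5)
Your main argument is the paper's proof. It uses \cite[Thm.~6.6]{rockafellar1970convex} to get $y_c \in \relint\SetY = E(\relint\SetW)+e$, then \Cref{ass:implicit} to get $\aff\SetY = \Reals{n}$, so that $\relint\SetY = \interior\SetY$; you additionally make explicit the identity $\aff\SetY = E(\aff\SetW)+e$, which the paper uses tacitly. Your self-contained alternative via a bounded right inverse of $E|_{\mathbb{L}}$ is also correct. Concretely, take $R = F(EF)^\dagger$ with $F$ a basis matrix of $\mathbb{L}$; this works because $EF$ has full row rank, and it maps into $\mathbb{L}$, which $E^\dagger$ alone need not do.
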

\begin{proof}
    According to \cite[Thm.~6.6]{rockafellar1970convex}, the relative interior of a convex set is preserved under linear transformations. Thus, the relative interior of the target set is given by $\relint\SetY = E(\relint\SetW) + e$. Since $w_c \in \relint\SetW$, it follows that $y_c \in \relint\SetY$. By \Cref{ass:implicit}, the affine hull of the target set is the entire space, thus $\relint\SetY = \interior\SetY$ and $y_c \in \interior\SetY$.
\end{proof}

A robust method to compute a point $w_c \in \relint\SetW$ is to explicitly parameterize the affine hull of the set. Any point $w \in \aff\SetW$ can be uniquely expressed using a reduced coordinate vector $v$ as $w = F v + w_p$, where the columns of the matrix $F$ form a basis for the linear subspace parallel to $\aff\SetW$, and $w_p \in \aff\SetW$ is a particular translation vector.

By applying this coordinate transformation, we project the problem into a lower-dimensional space, defining the reduced set:
\begin{equation*}
    \SetV \triangleq \{ v \mid F v + w_p \in \SetW \}\,.
\end{equation*}
Because this parameterization explicitly eliminates the affine equality constraints, the resulting set $\SetV$ is solid ($\interior\SetV \neq \emptyset$). Consequently, one can compute an interior point $v_c \in \interior\SetV$ using standard convex optimization techniques, such as finding the Chebyshev center. The required relative interior point is finally recovered via $w_c = F v_c + w_p \in \relint\SetW$.

To illustrate this coordinate reduction in a practical OC setting, we provide the following example.

\begin{example}[Interior Point Computation for Linear MPC]\label{ex:ocp_interior}
    Consider the parameterized OCP~\eqref{eq:ocp_terminal}, where the generic decision vector is $z = (x_{0:N}, u_{0:N-1})$, composed of the state sequence $x_{0:N}$ and the input sequence $u_{0:N-1}$ (note that we have ordered the components of $z$ differently than in the proof of \Cref{cor:linear_mpc}). Because the stage-wise sets $\SetZ_k$ and the terminal set $\SetX_N$ are assumed to be solid, the affine equality constraints defining the affine hull of $\SetW(x)$ stem exclusively from the initial state condition and the system dynamics.

    We can compactly express these equalities as a linear system:
    \begin{equation*}
        \mathcal{A} x_{0:N} + \mathcal{B} u_{0:N-1} = \mathcal{C} x\,.
    \end{equation*}
    Because the matrix $\mathcal{A}$ represents the causal state propagation, it is inherently lower triangular with identity matrices on the diagonal, and thus always invertible. This allows us to explicitly express the state trajectory in terms of the initial state $x$ and the control sequence $u_{0:N-1}$:
    \begin{equation*}
        x_{0:N} = \mathcal{A}^{-1}(\mathcal{C} x - \mathcal{B} u_{0:N-1})\,.
    \end{equation*}

    This explicit relation serves as the exact analogue to the generic affine parameterization ($w = F v + w_p$), where the control sequence $u_{0:N-1}$ acts as the reduced coordinate vector~$v$. Substituting this expression for $x_{0:N}$ into the constraints projects the problem into the reduced input space, defining
    \begin{equation*}
        \SetV(x) \triangleq \left\{ u_{0:N-1} \;\middle|\;
        \begin{array}{l}
            (x_k, u_k) \in \SetZ_k, \; x_N \in \SetX_N \\
            \text{with } x_{0:N} = \mathcal{A}^{-1}(\mathcal{C} x - \mathcal{B} u_{0:N-1})
        \end{array}
        \right\}\,.
    \end{equation*}
    Provided the initial state $x$ admits a strictly feasible trajectory (as assumed in \Cref{cor:linear_mpc}), there exists at least one control sequence that strictly satisfies the constraints imposed by the solid bounding sets $\SetZ_k$ and $\SetX_N$. Consequently, the resulting set $\SetV(x)$ is guaranteed to be a solid set ($\interior\SetV(x) \neq \emptyset$). One can therefore compute an interior point $u_{c, 0:N-1} \in \interior\SetV(x)$. Finally, the corresponding state trajectory $x_{c, 0:N}$ is computed via dynamic propagation, yielding the required relative interior point $w_c = (x_{c, 0:N}, u_{c, 0:N-1}) \in \relint\SetW(x)$.
\end{example}

Numerical experience indicates that relying on interior points computed via \Cref{prop:implicit_interior} can occasionally introduce numerical sensitivities within the mapping procedure detailed in \Cref{alg:basic}. For instance, if the target set $\SetY$ exhibits a highly skewed aspect ratio or narrow regions, an arbitrarily positioned interior point might fall disproportionately close to a boundary, thereby compromising the reliable computation of the boundary scaling factor~$\beta$. To mitigate this, we propose two optimization-based formulations designed to compute an interior point~$y_c^*$ ``deep'' within $\SetY$ by explicitly maximizing an inscribed norm ball, all without requiring an explicit representation of $\SetY$.

The first formulation utilizes a vertex-based representation.

\begin{proposition}[Vertex-Based Inscribed Ball]\label{prop:vertex_inscribed}
    Let $\Ball(r) \subset \Reals{n}$ be the closed ball of radius $r > 0$ under a polyhedral norm~$\|\cdot\|_P$, i.e., $\Ball(r) = \{ q \mid \| q \|_P \le r \}$, and let $\{ q_j \}_{j=1}^{m}$ denote the finite set of vertices of the unit ball $\Ball(1)$. Assume we are given a nonempty convex set~$\SetW$ and an affine mapping defined by $(E, e)$ such that $\SetY = E \SetW + e$. Let $G \in \Reals{n \times n}$ be a user-defined invertible shape matrix, chosen to align the inscribed ball with the geometry of the target set. The radius $r^*$ of the largest scaled ball $G \Ball(r^*) + y_c$ inscribed in $\SetY$ is obtained by solving the following convex optimization problem:%
    \begin{align}\label{eq:poly_norm_ball}
        r^* = \max_{r, y_c, w_{1:m}} \quad & r                                                                         \\
        \text{s.t.} \quad                  & E w_j + e = r G q_j + y_c, \quad \forall j \in \{1, \dots, m\}, \nonumber \\
                                           & w_j \in \SetW, \quad \forall j \in \{1, \dots, m\}\,. \nonumber
    \end{align}
    If $r^* > 0$, then $y_c^* \in \interior\SetY$.
\end{proposition}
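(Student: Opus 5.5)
The plan is to show that the feasible set of \eqref{eq:poly_norm_ball}, projected onto $(r, y_c)$, is exactly the set of pairs for which $G\Ball(r) + y_c \subseteq \SetY$. The maximizer then gives the largest inscribed scaled ball, and $r^*>0$ forces $y_c^*$ into the interior.

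\emph{Step 1 (constraints are equivalent to containment).} For fixed $(r,y_c)$ with $r\ge 0$, the constraints say that for each $j$ there is some $w_j\in\SetW$ with $E w_j + e = rGq_j + y_c$, i.e.\ $rGq_j + y_c \in \SetY$. Since $\Ball(1)$ is a polytope (the unit ball of a polyhedral norm is compact and has finitely many vertices), $\Ball(1) = \mathrm{conv}\{q_j\}$. The affine map $q\mapsto rGq+y_c$ maps convex hulls to convex hulls, so $G\Ball(r)+y_c = \mathrm{conv}\{rGq_j+y_c\}_j$. The set $\SetY = E\SetW+e$ is convex as the affine image of a convex set. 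Hence containment of all vertex images in $\SetY$ is equivalent to containment of the whole scaled ball. The converse direction is immediate, because the vertex images belong to the ball.

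\emph{Step 2 (convexity of the problem).} The objective is linear. The equality constraints are jointly affine in $(r,y_c,w_{1:m})$, since $G$ and $q_j$ are data. The constraints $w_j\in\SetW$ are convex. So \eqref{eq:poly_norm_ball} is a convex program, and by Step 1 its optimal value is the supremum of radii of inscribed balls $G\Ball(r)+y_c\subseteq\SetY$. One could remark that the supremum is attained if $\SetW$ is compact, but the statement only needs the value $r^*$ and the optimizer as given.

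\emph{Step 3 (interiority).} If $r^*>0$, then by Step 1 $G\Ball(r^*)+y_c^*\subseteq\SetY$. Because $\|\cdot\|_P$ is a norm, $\Ball(r^*)$ contains a Euclidean ball of positive radius around $0$ by norm equivalence. Because $G$ is invertible, its image under $G$ contains an open neighborhood of $0$. Translating by $y_c^*$ gives an open neighborhood of $y_c^*$ inside $\SetY$, so $y_c^*\in\interior\SetY$.

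The main obstacle is Step 1: we must justify that the unit ball of a polyhedral norm equals the convex hull of its finitely many vertices. This follows from the Minkowski/Krein--Milman theorem for compact polytopes. The sign $r\ge0$ also needs care, because a negative $r$ would just reflect the vertex set. The reflected set is still the convex hull of $\{-|r|Gq_j+y_c\}$, and this causes no issue since we only reason about $r^*>0$.
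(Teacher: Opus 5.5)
Your proposal is correct and follows essentially the same route as the paper. You reduce containment of the polytopic ball $G\Ball(r)+y_c$ in the convex set $\SetY$ to containment of its finitely many vertex images, each of which must have a pre-image in $\SetW$, and then use invertibility of $G$ to place $y_c^*$ in $\interior\SetY$ when $r^*>0$. Your extra care (the joint affinity check, norm equivalence to show the center itself is interior, and the remarks on the sign of $r$ and attainment) only makes explicit what the paper leaves implicit.
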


\begin{proof}
    Because $G \Ball(r) + y_c$ is a convex polytope, the inclusion condition $G \Ball(r) + y_c \subseteq \SetY$ holds if and only if all vertices of the former are contained within the latter. Thus, it is necessary and sufficient that $r G q_j + y_c \in \SetY$ for all $j \in \{1, \dots, m\}$. By the definition of $\SetY$, this implies that every such mapped vertex must correspond to at least one valid pre-image $w_j \in \SetW$ such that $E w_j + e = r G q_j + y_c$. Maximizing $r$ yields the largest such inscribed ball. Since $\SetW$ is nonempty, $r^* \ge 0$. If $r^* > 0$, the invertibility of $G$ guarantees that the inscribed ball possesses a nonempty interior, proving that the optimal center point satisfies $y_c^* \in \interior \SetY$.
\end{proof}

For a 1-norm ball, this formulation introduces $m = 2n$ vertices, whereas an $\infty$-norm ball yields $m = 2^n$ vertices. Consequently, the computational burden of \eqref{eq:poly_norm_ball} restricts its real-time applicability to target sets of small dimension~$n$.

To circumvent the scaling of the number of constraints with the target set dimension $n$, we present an alternative formulation that builds upon the concept of lifting. This approach naturally extends to non-polyhedral norm balls and restricts the number of constraints to the order required to describe $\SetW$. While this method provides only a lower bound to the true maximum radius~$r^*$, it remains computationally tractable for higher dimensions~$n$. For brevity, we detail this approach specifically for a polyhedral $\SetW$, noting that the principle generalizes to arbitrary convex sets.

\begin{proposition}[Lifting-Based Inscribed Ball]\label{prop:dual_inscribed}
    Let $\SetB(r) \subset \Reals{n}$ be the closed ball of radius $r > 0$ under an arbitrary norm~$\|\cdot\|$. Assume $\SetW$ is a nonempty polyhedral set given by $\SetW = \{ w \in \Reals{n_w} \mid a_j^\top w \le b_j, \, j = 1, \ldots, n_c\}$ and its affine image is $\SetY = E \SetW + e$. Let $G \in \Reals{n \times n}$ be a user-defined invertible shape matrix. A lower bound $\underline{r} \le r^*$ to the radius of the largest scaled ball $G \SetB(r^*) + y_c$ inscribed in $\SetY$ is computed by the following convex optimization problem over the variables $r \in \Reals{}$, $w_c \in \Reals{n_w}$, and $L \in \Reals{n_w \times n}$:
    \begin{align}\label{eq:dual_norm_ball}
        \underline{r} = \max_{r, L, w_c} \quad & r                                                                                           \\
        \text{s.t.} \quad                      & E L = r G, \nonumber                                                                        \\
                                               & \|L^\top a_j\|_* + a_j^\top w_c \le b_j, \quad \forall j \in \{1, \dots, n_c\}\,, \nonumber
    \end{align}
    where $\|\cdot\|_*$ denotes the dual norm of $\|\cdot\|$. If $\underline{r} > 0$, then $y_c^* = E w_c^* + e \in \interior\SetY$.
\end{proposition}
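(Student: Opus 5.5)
The idea is to show that any feasible point $(r, L, w_c)$ of \eqref{eq:dual_norm_ball} certifies an inscribed ball of radius $r$ centered at $y_c = E w_c + e$. The intended mechanism is a lifting: the ball $G\SetB(r) + y_c$ is realized as the affine image of a lifted set $L\SetB(1) + w_c$ that lies inside $\SetW$. Applying $E$ and the constraint $EL = rG$ then carries this inclusion down to $\SetY$. Since the feasible set of \eqref{eq:dual_norm_ball} is a restriction of the exact inscribed-ball problem, its optimal value can only be a lower bound, $\underline{r} \le r^*$.

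The proof proceeds in three steps.

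\emph{Step 1: containment in $\SetW$.} Fix a feasible triple $(r, L, w_c)$. For any $u \in \SetB(1)$, the definition of the dual norm, $\|v\|_* = \sup_{\|u\|\le 1} v^\top u$, gives
\[
a_j^\top(w_c + Lu) = a_j^\top w_c + (L^\top a_j)^\top u \le a_j^\top w_c + \|L^\top a_j\|_* \le b_j .
\]
This holds for every $j$, so $w_c + L\SetB(1) \subseteq \SetW$. This is the robust-counterpart reformulation of the polyhedral constraints. It is also why the number of constraints equals $n_c$ and does not depend on the number of vertices of the ball.

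\emph{Step 2: push forward to $\SetY$.} Applying the affine map gives
\[
E(w_c + L\SetB(1)) + e = y_c + rG\SetB(1) = G\SetB(r) + y_c \subseteq E\SetW + e = \SetY,
\]
where the middle equality uses $EL = rG$ and $r\SetB(1) = \SetB(r)$ for $r \ge 0$. So every feasible $r \ge 0$ is the radius of a scaled ball inscribed in $\SetY$. Taking the supremum, and noting that $r^*$ is by definition the largest such radius, yields $\underline{r} \le r^*$.

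\emph{Step 3: interiority and well-posedness.} If $\underline{r} > 0$, the invertibility of $G$ makes $G\SetB(\underline{r})$ a full-dimensional neighborhood of the origin. The optimal center $y_c^* = E w_c^* + e$ therefore has a neighborhood contained in $\SetY$, so $y_c^* \in \interior\SetY$. I would also remark on two points of well-posedness:
\begin{itemize}
\item Problem \eqref{eq:dual_norm_ball} is convex: the objective is linear, $EL = rG$ is linear in $(L, r)$, and $\|L^\top a_j\|_*$ is a convex function of $L$.
\item The problem is feasible, since $\SetW \neq \emptyset$ allows the choice $r = 0$, $L = 0$ with any $w_c \in \SetW$, which gives $\underline{r} \ge 0$.
\end{itemize}
To assert that the maximum is attained rather than being only a supremum, I would use boundedness of $\SetY$ or restrict the statement to the supremum.

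The main obstacle is Step 1. Care is needed that the dual-norm bound is exactly the worst case over the unit ball, so the implication runs in the right direction (feasible for \eqref{eq:dual_norm_ball} $\Rightarrow$ inscribed). It should also be made clear that the converse fails in general, because the lifted set must be an ellipsoid-like linear image $L\SetB(1)$ rather than an arbitrary preimage. That failure is why only a lower bound is claimed. Attainment of the optimum is a secondary technical point, which I would handle briefly via compactness or sidestep.
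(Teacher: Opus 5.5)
Your proof is correct and follows the paper's argument. Both restrict the preimage to the affine lifting $w(q) = Lq + w_c$, use the dual-norm identity $\sup_{\|q\|\le 1} a_j^\top L q = \|L^\top a_j\|_*$ to turn $L\SetB(1) + w_c \subseteq \SetW$ into the $n_c$ constraints, push forward via $EL = rG$ and $y_c = Ew_c + e$, and use invertibility of $G$ for interiority when $\underline{r} > 0$. On your side remark about attainment, which the paper does not address: compactness of the feasible set comes from boundedness of $\SetW$, not of $\SetY$, because otherwise $L$ and $w_c$ can drift unboundedly along directions in $\ker E$.
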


\begin{proof}
    We require $r G q + y_c \in \SetY$ for all $q \in \SetB(1)$. This holds if there exists a pre-image $w(q) \in \SetW$ satisfying $E w(q) + e = r G q + y_c$. Based on the concept of lifting techniques for inner approximations, we restrict this pre-image to the affine parameterization $w(q) = L q + w_c$. Matching the linear and constant terms for all $q$ yields $E L = r G$ and $y_c = E w_c + e$.

    The inclusion $w(q) \in \SetW$ requires $a_j^\top (L q + w_c) \le b_j$ for all $q \in \SetB(1)$ and all $j \in \{1, \ldots, n_c\}$. By the definition of the dual norm, this is exactly equivalent to $\|L^\top a_j\|_* + a_j^\top w_c \le b_j$. Maximizing $r$ under this restrictive parameterization yields a valid convex optimization problem that provides a lower bound $\underline{r} \le r^*$. If $\underline{r} > 0$ and $G$ is invertible, the inscribed ball is solid, ensuring that $y_c^* = E w_c^* + e \in \interior\SetY$.
\end{proof}

\subsubsection{Computing the Linear Transformation Matrix}

To mitigate point accumulation during the mapping process, \Cref{alg:basic} utilizes a linear transformation matrix~$\Phi$ derived from affinely related surrogate sets (cf.~\Cref{prop:ellipsoid_surrogate}). Without having access to an explicit representation of $\SetY$, existing methods to compute a surrogate set (e.g., the MVIE for polytopes) cannot be applied. Instead, we construct a solid surrogate set $\widetilde{\SetY}$ by computing an appropriate ellipsoidal surrogate in the higher-dimensional space~$\Reals{n_w}$ and mapping it forward.

To ensure that $\widetilde{\SetY}$ is solid in $\Reals{n}$, its pre-image surrogate must span the affine hull $\aff \SetW$. We achieve this by leveraging the reduced, solid set $\SetV \subset \Reals{n_v}$ introduced in the previous section. Because $\SetV$ is constructed by explicitly parameterizing the affine hull of $\SetW$ via the basis $F$ and particular solution $w_p$, any solid surrogate within $\SetV$ naturally spans this affine subspace. The following proposition details the exact recovery of the target shape matrix.

\begin{proposition}[Implicit Target Shape Matrix]\label{prop:implicit_surrogate}
    Let set $\widetilde{\SetV} = \{ Q_{\widetilde{\SetV}} u + q_{\widetilde{\SetV}} \mid \|u\|_2 \le 1 \}$ be a solid ellipsoidal surrogate for the reduced set~$\SetV$, i.e., $Q_{\widetilde{\SetV}} \in \Reals{n_v \times n_v}$ is invertible. Let the corresponding lifted surrogate set be defined as $\widetilde{\SetW} = F \widetilde{\SetV} + w_p$. If \Cref{ass:implicit} holds, the mapped surrogate in the target space, $\widetilde{\SetY} = E \widetilde{\SetW} + e$, is a solid ellipsoid. Furthermore, if $E F Q_{\widetilde{\SetV}} = U \begin{bmatrix} \Sigma & 0 \end{bmatrix} V^\top$, the target surrogate is exactly characterized by the invertible shape matrix $Q_{\widetilde{\SetY}} = U \Sigma$, meaning $\widetilde{\SetY} = \{ Q_{\widetilde{\SetY}} v + q_{\widetilde{\SetY}} \mid \|v\|_2 \le 1 \}$, with center $q_{\widetilde{\SetY}} = E(F q_{\widetilde{\SetV}} + w_p) + e$.
\end{proposition}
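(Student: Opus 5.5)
The plan is to compose the affine maps explicitly, writing $M \triangleq E F Q_{\widetilde{\SetV}} \in \Reals{n \times n_v}$. Substituting the definitions gives
\begin{equation*}
    \widetilde{\SetY} = \{ E(F(Q_{\widetilde{\SetV}} u + q_{\widetilde{\SetV}}) + w_p) + e \mid \|u\|_2 \le 1 \} = \{ M u + q_{\widetilde{\SetY}} \mid \|u\|_2 \le 1 \},
\end{equation*}
with $q_{\widetilde{\SetY}} = E(F q_{\widetilde{\SetV}} + w_p) + e$ as claimed. The rest of the argument has two parts. First I show that $M$ has full row rank $n$. Then I use the SVD to replace the "wide" matrix $M$ by the square invertible matrix $U\Sigma$ without changing the image of the unit ball.

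For the rank step, I first show that $\range(EF) = \Reals{n}$. Since the columns of $F$ form a basis of the linear subspace parallel to $\aff\SetW$, and $w_p \in \aff\SetW$, we have $\aff\SetW = \range(F) + w_p$. Hence $E(\aff\SetW) = \range(EF) + E w_p$. By \Cref{ass:implicit} this set equals $\Reals{n}$. An affine subspace equal to $\Reals{n}$ must have a direction space equal to $\Reals{n}$, so $\range(EF) = \Reals{n}$. Because $Q_{\widetilde{\SetV}}$ is invertible, $\range(M) = \range(EF) = \Reals{n}$, so $\operatorname{rank} M = n$ and necessarily $n_v \ge n$. I expect this to be the main subtle point. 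It depends on identifying $\aff\SetW$ with $\range(F)+w_p$, which holds by construction of the reduced coordinates. Everything else is routine linear algebra.

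For the SVD step, write $M = U \begin{bmatrix} \Sigma & 0 \end{bmatrix} V^\top$, where $U \in \Reals{n\times n}$ and $V \in \Reals{n_v \times n_v}$ are orthogonal and $\Sigma$ is diagonal. Full row rank implies that all $n$ singular values are positive, so $\Sigma$ is invertible and $U\Sigma$ is invertible. Since $V^\top$ is orthogonal, $\{V^\top u \mid \|u\|_2\le 1\}$ is again the unit ball in $\Reals{n_v}$. Split $w = V^\top u = (w_1, w_2)$ with $w_1 \in \Reals{n}$. Then $Mu = U\Sigma w_1$. As $w$ ranges over the unit ball of $\Reals{n_v}$, the component $w_1$ ranges exactly over the unit ball of $\Reals{n}$. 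Every such $w_1$ is attained by taking $w_2 = 0$, and the bound $\|w_1\|_2 \le \|w\|_2$ shows nothing outside that ball appears. Therefore
\begin{equation*}
    \widetilde{\SetY} = \{ U\Sigma v + q_{\widetilde{\SetY}} \mid \|v\|_2 \le 1 \},
\end{equation*}
which is a solid ellipsoid with invertible shape matrix $Q_{\widetilde{\SetY}} = U\Sigma$. This completes the claim, and \Cref{prop:ellipsoid_surrogate} then applies with this shape matrix.
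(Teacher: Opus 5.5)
Your proof is correct and follows essentially the same route as the paper: write $\widetilde{\SetY}$ as the image of the unit ball under $E F Q_{\widetilde{\SetV}}$ plus the stated center, deduce full row rank from \Cref{ass:implicit}, and use the SVD to replace this wide matrix by the invertible $U\Sigma$. Your rank argument (via $\aff\SetW = \range(F) + w_p$ and the invertibility of $Q_{\widetilde{\SetV}}$) and your check, in both directions, that the first $n$ components of the rotated unit ball fill exactly the unit ball of $\Reals{n}$ are somewhat more explicit versions of the corresponding steps in the paper.
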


\begin{proof}
    Because $\widetilde{\SetV}$ is solid in $\Reals{n_v}$, it shares the affine hull of its embedding space, yielding $\aff \widetilde{\SetV} = \Reals{n_v}$. Consequently, the lifted surrogate strictly preserves the affine hull of the base set, i.e., $\aff \widetilde{\SetW} = F(\aff \widetilde{\SetV}) + w_p = \aff \SetW$. Under \Cref{ass:implicit}, this geometric preservation guarantees that $E(\aff \widetilde{\SetW}) = \Reals{n}$.

    Mapping the lifted surrogate forward yields the target set $\widetilde{\SetY} = \{ R u + q_{\widetilde{\SetY}} \mid \|u\|_2 \le 1 \}$ with $R = E F Q_{\widetilde{\SetV}} \in \Reals{n \times n_v}$. Because $E(\aff \widetilde{\SetW}) = \Reals{n}$, the matrix $R$ is guaranteed to possess full row rank, meaning $n \le n_v$. To extract a square, invertible shape matrix $Q_{\widetilde{\SetY}} \in \Reals{n \times n}$, we apply the SVD to $R$. Substituting $R = U \begin{bmatrix} \Sigma & 0 \end{bmatrix} V^\top$ into the definition of $\widetilde{\SetY}$ reveals that the orthogonal matrix $V^\top$ acts merely as an isometry on the unit ball $\{ u \mid \|u\|_2 \le 1 \}$ in $\Reals{n_v}$, leaving its geometry completely unaltered. Subsequently, the block matrix $\begin{bmatrix} I_{n} & 0 \end{bmatrix}$ orthogonally projects this rotated $n_v$-dimensional unit ball onto its first $n$ components, which results exactly in an $n$-dimensional unit ball $\{ v \mid \|v\|_2 \le 1 \}$. Thus, the shape of the projected ellipsoid is perfectly captured by $Q_{\widetilde{\SetY}} = U \Sigma$. The matrix $Q_{\widetilde{\SetY}}$ is therefore invertible, confirming that $\widetilde{\SetY}$ is a solid ellipsoid.
\end{proof}


\begin{example}[State-Independent Transformation Matrix]\label{ex:constant_phi}
    In order to demonstrate how to compute a \emph{state-independent} linear transformation matrix $\Phi$ for the parameterized, \emph{state-dependent} OCP in \eqref{eq:ocp_terminal}, we extend the approach from \Cref{ex:ocp_interior}. Note that if the transformation matrix were state-dependent, it would require solving an optimization problem---such as an SDP to compute the MVIE---at every control cycle, thereby violating real-time constraints.

    Instead, we construct an augmented set by lifting the initial state $x$ into the input space. Let $\SetX_0 \subset \Reals{n_x}$ be a solid and convex set of viable initial states. We define the augmented set as:
    \begin{equation*}
        \SetVaug \!\triangleq\! \left\{\! (x, u_{0:N-1}) \;\middle|
        \begin{array}{l}
            x \in \SetX_0,                                                      \\
            (x_k, u_k) \in \SetZ_k, \; \forall k \in \{0, \dots, N-1\},\!\!\!\! \\
            x_N \in \SetX_N,                                                    \\
            \text{with } x_{0:N} = \mathcal{A}^{-1}(\mathcal{C} x - \mathcal{B} u_{0:N-1})
        \end{array}
        \right\}.
    \end{equation*}
    Provided there exists at least one initial state $x \in \interior \SetX_0$ that admits a strictly feasible trajectory (residing in the interior of the sets $\SetZ_k$ and $\SetX_N$), the joint set $\SetVaug$ contains an interior point and is therefore solid. We can therefore compute its solid ellipsoidal surrogate offline, i.e.,
    \begin{equation*}
        \widetilde{\SetV}_a = \left\{ \begin{bmatrix} Q_x \\ Q_u \end{bmatrix} \bar{v} + \begin{bmatrix} q_x \\ q_u \end{bmatrix} \;\middle|\; \|\bar{v}\|_2 \le 1 \right\}\,,
    \end{equation*}
    where the invertible block shape matrix $\begin{bmatrix} Q_x^\top & Q_u^\top \end{bmatrix}^\top$ and center $\begin{bmatrix} q_x^\top & q_u^\top \end{bmatrix}^\top$ are state independent.

    At runtime, the initial state $x$ is fixed, imposing the linear constraint $Q_x \bar{v} = x - q_x$. The general solution for the unit ball vector is
    \begin{equation*}
        \bar{v} = \bar{v}_p(x) + K_{\perp} v\,,
    \end{equation*}
    where $\bar{v}_p(x) = Q_x^\dagger (x - q_x)$ is the minimum-norm particular solution and $K_{\perp}$ is an orthonormal basis for the nullspace of $Q_x$. Because $\bar{v}_p(x)$ is orthogonal to the nullspace, we obtain $\|\bar{v}\|_2^2 = \|\bar{v}_p(x)\|_2^2 + \|v\|_2^2 \le 1$. We can therefore reparameterize the nullspace vector as $v = r(x) \tilde{v}$, where $\|\tilde{v}\|_2 \le 1$ and $r(x) = \sqrt{1 - \|\bar{v}_p(x)\|_2^2}$. Note that if $\|\bar{v}_p(x)\|_2 > 1$, the scalar $r(x)$ becomes undefined, which simply implies that the state $x$ lies outside the (inner) surrogate ellipsoid even though the true feasible set may still be nonempty. We can safely proceed without explicitly evaluating $r(x)$ because, as shown below, it acts solely as a scaling factor that is ultimately discarded.

    Substituting this parameterization back into the control sequence block yields:
    \begin{equation*}
        u_{0:N-1} = \big(r(x) Q_u K_{\perp}\big) \tilde{v} + \big(Q_u \bar{v}_p(x) + q_u\big)\,.
    \end{equation*}
    The exact shape matrix of this geometric ``slice'' is $r(x) Q_u K_{\perp}$. Crucially, $r(x)$ is only a positive scalar. Because the radial mapping in \Cref{alg:basic} inherently scales the directional transformation $\phi(d) = \Phi d$ to the boundary of the target set via the line-search parameter $\beta$, any uniform scalar factor inside $\Phi$ is completely absorbed. Thus, we can safely discard $r(x)$ and define the effective shape matrix in the control space as:
    \begin{equation*}
        Q_{\widetilde{\SetV}} = Q_u K_{\perp}\,.
    \end{equation*}
    Because $Q_u$ and $K_{\perp}$ are computed entirely offline, $Q_{\widetilde{\SetV}}$ is strictly independent of $x$. Propagating this constant shape matrix forward via \Cref{prop:implicit_surrogate} yields a strictly constant target shape matrix $Q_{\widetilde{\SetY}}$, and consequently, a state-independent linear transformation matrix $\Phi$.
\end{example}

\subsubsection{Computing the Boundary Scaling Factor}

Finally, to execute \Cref{alg:basic}, we must determine the boundary scaling factor $\beta > 0$ along the mapped direction $d' = \Phi d$. Rather than performing a boundary search against an explicitly computed geometric representation of $\SetY$---which would require computationally prohibitive projection operations online---we maximize the scaling factor directly by lifting the search into the higher-dimensional space of the decision variables. Because $\SetW$ is a convex set, this is formulated as a single convex optimization problem:
\begin{align}\label{eq:implicit_beta_opt}
    \max_{\beta, w} \quad & \beta                               \\
    \text{s.t.} \quad     & E w + e = y_c + \beta d', \nonumber \\
                          & w \in \SetW, \nonumber
\end{align}

By coupling the robust inscribed-ball formulations for the interior point $y_c$ (\Cref{prop:vertex_inscribed} or \Cref{prop:dual_inscribed}), the state-independent linear transformation matrix $\Phi$ derived via \Cref{prop:implicit_surrogate} and \Cref{ex:constant_phi}, and the scaling computation~\eqref{eq:implicit_beta_opt}, the mapping~$\map$ given by \Cref{alg:basic} is executed without ever computing an explicit geometric representation of the state-dependent target set $\SetY$ during the coupled execution of the RL policy and the OCP. This circumvention of complex online geometric operations ensures that FAOC remains computationally tractable and highly efficient for real-time control.

\section{Experimental Evaluation}
\label{sec:results}
We evaluate FAOC on robot table tennis using an 8-DoF mobile manipulator.
The proposed framework is deployed on a real robot that, for the first time, competes and wins against professional-level players in official ITTF matches.
Additional experimental details are provided in~\cite{ace2026}.

\begin{figure}
    \centering

    \begin{subfigure}{\linewidth}
    \centering
        \includegraphics[width=0.7\linewidth]{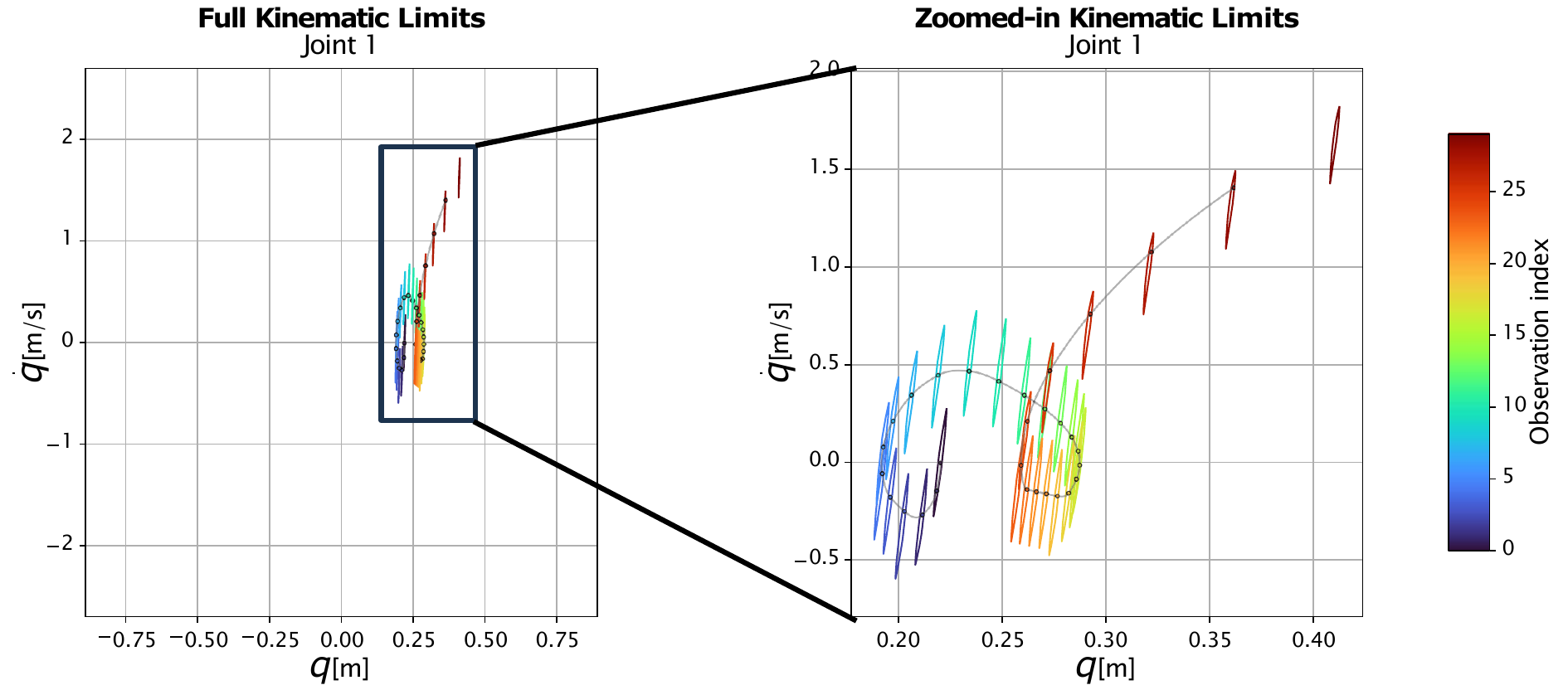} 
    \end{subfigure}
    \vspace{1cm}
    \begin{subfigure}{\linewidth}
    \centering
        \includegraphics[width=0.8\linewidth]{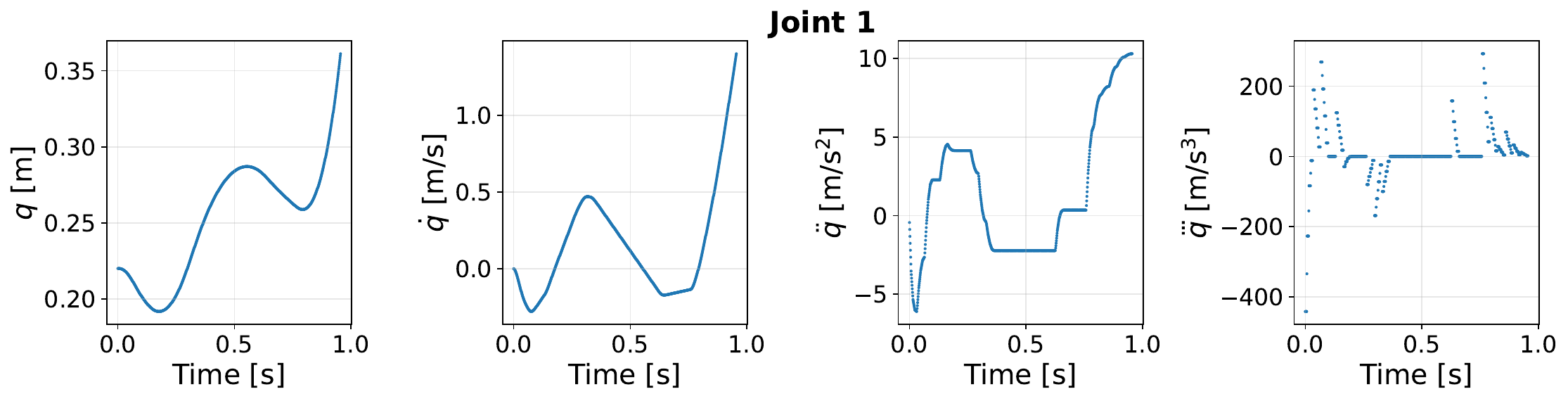} 
    \end{subfigure}

    \caption{Illustration of FAOC deployed during a table tennis game.
        Top: the executed sequence of actions (black dots), position--velocity trajectories (gray lines) and feasible polytopes for one robot joint during a robot shot.
        On the top-left, the ranges for the $x$-axis (position) and $y$-axis (velocity) correspond to the full kinematic limits of that joint.
        On the top-right, the axes are magnified for the range used during that particular motion.
        The feasible polytopes are small due to the relatively high control frequency of \qty{31.25}{\hertz}.
        Bottom: corresponding position, velocity, acceleration, and jerk trajectories as a function of time.
    }
    \label{fig:polytopes_traj}
\end{figure}

\subsection{Experimental setup}
\label{sec:exp_setup}

We first outline the hierarchical control architecture driving the robotic system. To balance strategic decision-making with hardware safety, the control stack operates across two timescales. At the high level, an RL policy processes visual and kinematic feedback to dictate the interception strategy while avoiding static obstacles (e.g., the table and robot links). At the low level, joint controllers track position references at \qty{1}{\kilo\hertz} for low latency. FAOC bridges these layers, translating the RL agent's intent into dynamically feasible joint trajectories via an OCP.

\begin{figure*}[t]
    \centering

    \begin{subfigure}[c]{0.25\textwidth}
        \centering
        \includegraphics[
            trim=0.0cm 0.0cm 2.0cm 0.0cm,
            clip,
            width=\linewidth
        ]{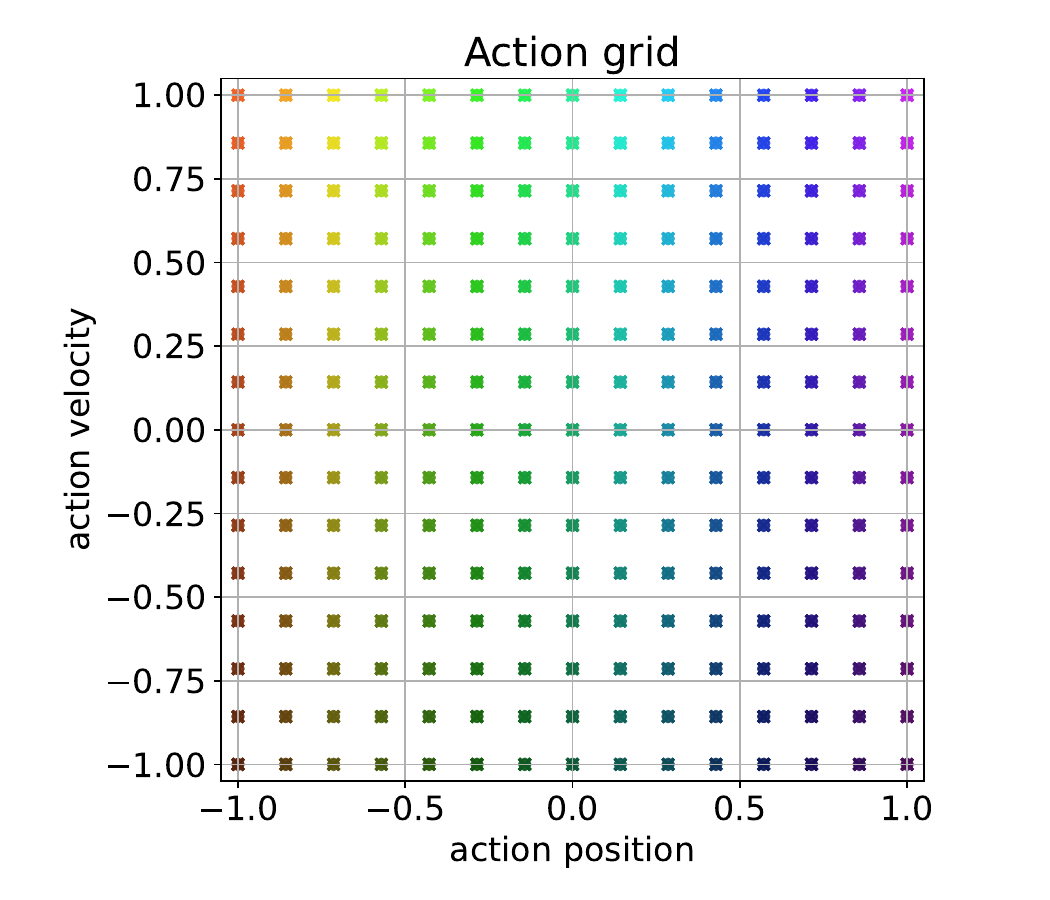}
        \caption{}
        \label{fig:action_grid}
    \end{subfigure}
    \hfill
    \begin{subfigure}[c]{0.72\textwidth}
        \centering

        \includegraphics[width=\linewidth]{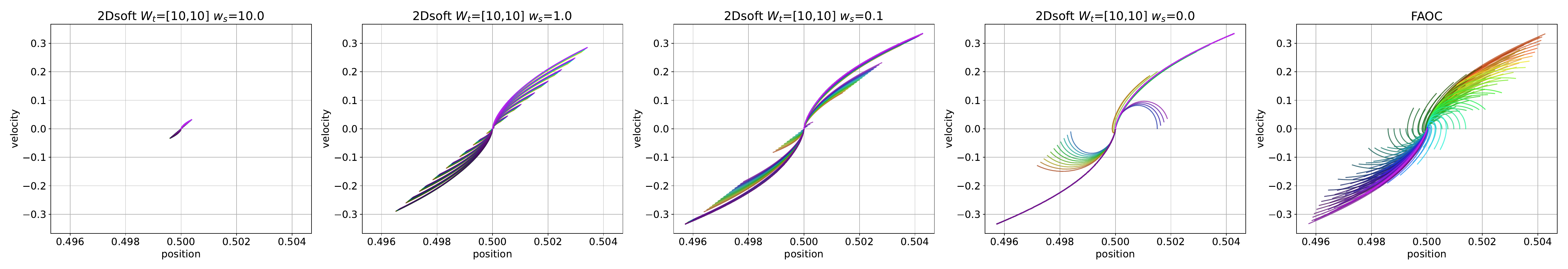}

        \vspace{0.3cm}

        \includegraphics[width=\linewidth]{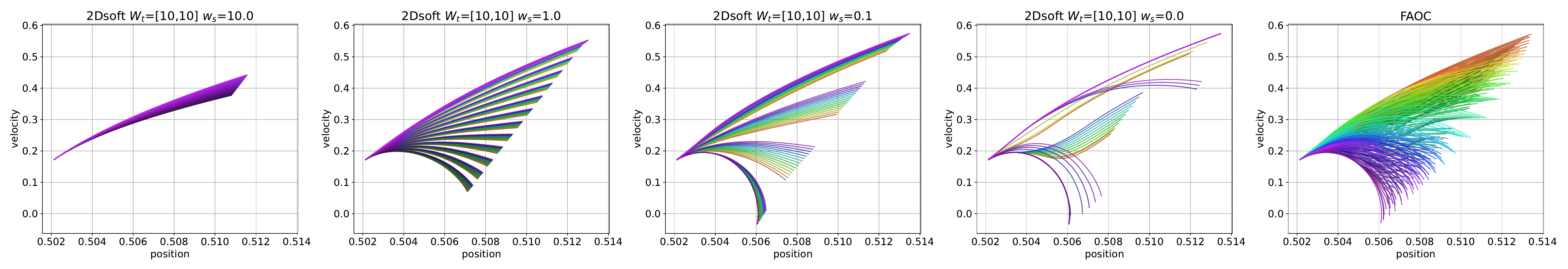}

        \caption{}
        \label{fig:joint_comparison}
    \end{subfigure}

    \caption{
        Actions sampled in a $15 \times 15$ grid for FAOC and 2Dsoft (left).
        Comparison of FAOC (right-most column) vs 2Dsoft (four left columns, for different combinations of cost weights $W_t$ and $w_s$) trajectories generated from those sampled actions for two different initial states (rows).
    }
    \label{fig:action_grid_and_comparison}
\end{figure*}

\subsubsection{Motion Planning Setup}

At each control step, the RL agent observes the full 157-dimensional system state (e.g., ball observations, joint states and desired skill) and outputs a 16-dimensional target position and velocity waypoint vector (two for each of the eight joints). These waypoints must be reached after a \qty{32}{\milli\second} horizon, therefore the RL agent acts at \qty{31.25}{\hertz}. The low-level controllers require reference positions every \qty{1}{\milli\second}, thus, for each joint, we plan a trajectory of 32 open-loop setpoints. To ensure smooth, safe motions and prevent excessive torques, these trajectories must strictly adhere to physical limits on position, velocity, acceleration, and jerk.

To meet these requirements efficiently, we solve a parameterized OCP (structured as in \Cref{cor:linear_mpc}) within the FAOC framework for each joint per RL step. Instead of planning at the native \qty{1}{\milli\second} resolution, we divide the \qty{32}{\milli\second} horizon into $N=4$ intervals of $T = \qty{8}{\milli\second}$, assuming constant jerk over each. Integrating this piecewise-constant jerk three times yields a continuous-time cubic spline for the position trajectory. The OCP is formulated as
\begin{align}\label{eq:optimization}
    \min_{x_{0:N}, u_{0:N-1}} \quad & \frac{1}{2}\sum_{k=0}^{N-1} u_k^2                                         \\
    \text{s.t.} \quad               & x_{k+1} = A x_k + B u_k, \quad \forall k \in \{0, \dots, N-1\}, \nonumber \\
                                    & x_0 = x, \nonumber                                                        \\
                                    & (x_k, u_k) \in \SetZ_k, \quad \forall k \in \{0, \dots, N-1\}, \nonumber  \\
                                    & x_{N} \in \SetX_\infty \nonumber                                          \\
                                    & C x_N = p\,, \nonumber
\end{align}
where $u_k \in \mathbb{R}$ is the jerk input and $x_k \in \mathbb{R}^3$ denotes the state (position, velocity, acceleration) at the interval boundaries. The matrices $A \in \mathbb{R}^{3 \times 3}$ and $B \in \mathbb{R}^{3 \times 1}$ represent the exact zero-order hold discretization of the continuous-time triple integrator at the $T = \qty{8}{\milli\second}$ interval:
\begin{equation*}
    A = \begin{bmatrix} 1 & T & \frac{T^2}{2} \\ 0 & 1 & T \\ 0 & 0 & 1 \end{bmatrix}, \quad B = \begin{bmatrix} \frac{T^3}{6} \\ \frac{T^2}{2} \\ T \end{bmatrix}\,.
\end{equation*}

Here, stage constraints $\SetZ_k$ encapsulate the state and input limits. To guarantee recursive feasibility, the terminal state $x_N$ is constrained to the Maximum Control Invariant Set (MCIS), denoted as $\SetX_\infty$~\cite{rawlings2017model}.

Crucially, we slightly shrink the maximum position and velocity limits within $\SetZ_k$. As detailed in \cite{patent2025}, this mathematically rigorous conservatism guarantees that sampling the resulting continuous-time spline at \qty{1}{\milli\second} never violates the original, unshrunk position limits or the limits on the backward-difference approximations of velocity, acceleration, and jerk. This technique decouples the OCP discretization ($T = \qty{8}{\milli\second}$) from the low-level control rate (\qty{1}{\milli\second}), reducing the OCP horizon by a factor of eight and drastically reducing the MCIS geometric complexity to enable real-time computation.

The target parameter $p \in \SetP{x} \subset \mathbb{R}^2$ defines the desired terminal waypoint. It is generated by mapping the abstract RL action $\bar{a}$ via the implicit transformation from \Cref{ssec:implicit}. The full-row-rank matrix $C = \begin{bmatrix} I_{2} & 0 \end{bmatrix} \in \mathbb{R}^{2 \times 3}$ isolates the position and velocity components from $x_N$ to enforce the target.

The resulting quadratic programs are solved in realtime with DAQP~\cite{arnstrom2022dual}. The MCIS is computed offline using Qhull~\cite{barber96qhull} to eliminate redundancies. While the 2D area-matching transformation is applicable, we exclusively use the linear transformation due to its lower computational cost, superior numerical robustness, and equivalent performance.

\subsubsection{Compared controllers}
We compare FAOC against alternative controllers in simulation, as statistically meaningful real-world comparisons are impractical due to the limited availability of professional table-tennis players.

First, we implement the method of~\cite{kiemel2021learning} using the \texttt{klimits} library\footnote{\url{https://github.com/translearn/limits}, version 1.1.3, MIT License}. Like FAOC, this method incorporates reachability into the action space design. However, it is limited to one-dimensional waypoints, so the RL agent specifies either position, velocity, or acceleration. We refer to these variants as \textit{1Dpos}, \textit{1Dvel}, and \textit{1Dacc}.

We also compare FAOC with a two-dimensional controller that does not account for reachability when defining the action space.
The RL policy outputs a position--velocity waypoint over the full kinematic range (cf. top-left rectangle in \Cref{fig:polytopes_traj}), which may therefore be infeasible. To accommodate infeasible targets, the OCP replaces the terminal equality constraint in~\eqref{eq:optimization} with the soft terminal cost
\begin{equation*}
    \frac{1}{2}\left(
    \sum_{k=0}^{N-1} w_s u_k^2
    +
    \|p-Cx_N\|_{W_t}^2
    \right)\,,
\end{equation*}
where $w_s \in \Reals{}_{++}$ and $W_t \in \mathbb{S}_{++}^2$ are the stage and terminal weights, respectively. This formulation is implemented with the \textit{Clarabel} solver through the \texttt{qpsolvers} library~\cite{qpsolvers}\footnote{\url{https://github.com/qpsolvers/qpsolvers}, version 3.4.0, LGPL-3.0 license.}, and is referred to as \textit{2Dsoft}.

Unlike FAOC, 2Dsoft uses a static, state-independent action space and relies on the cost weights to balance trajectory smoothness against waypoint tracking. As a result, the action space may contain unreachable waypoints, introducing redundant actions whose extent depends on the robot state and the choice of $w_s$ and $W_t$. In contrast, FAOC guarantees feasible waypoints through a terminal equality constraint and requires no tuning. These differences are illustrated in \Cref{fig:action_grid_and_comparison}.

\subsubsection{Reinforcement Learning Problem Formulation}
We formulate the RL agent using Universal Value Function Approximators (UVFAs)~\cite{schaul2015universal} to learn three table-tennis skills: ball placement, topspin/backspin generation, and smashes.
At the start of each training episode, one skill is sampled uniformly at random.
Each episode consists of returning a single incoming shot and terminates upon a successful return or a miss.
Training shots are sampled from a mixture of synthetic trajectories and real-world human shots to increase shot diversity.
The agent receives a reward only at the end of each episode, resulting in a sparse-reward problem.
Further details are provided in~\cite{ace2026}.

\subsubsection{Evaluation metrics}
The evaluation metrics quantify how effectively the agents learn the target skills. Each data point in \Cref{fig:stat_analysis,fig:training_curves} is computed from 1000 evaluation episodes.
\textit{Fraction Balls Returned} measures the return rate, while \textit{Generated Topspin}, \textit{Generated Velocity}, \textit{Generated Backspin}, and \textit{Goal Distance}, evaluate topspin, return velocity, backspin, and target accuracy, respectively.
Higher values indicate better performance except for \textit{Generated Backspin} and \textit{Goal Distance}, as indicated by the arrows in \Cref{fig:stat_analysis}.

All metrics are computed from the ball state at its bounce on the opponent's side and are therefore undefined for unsuccessful returns.
To reduce estimation variance, we exclude evaluations with a return rate below 50\%.
For each of the five controllers (FAOC, 2Dsoft, 1Dpos, 1Dvel, and 1Dacc), we train five agents using different random seeds.

\begin{figure}[t!]
    \centering
    \includegraphics[width=0.7\linewidth]{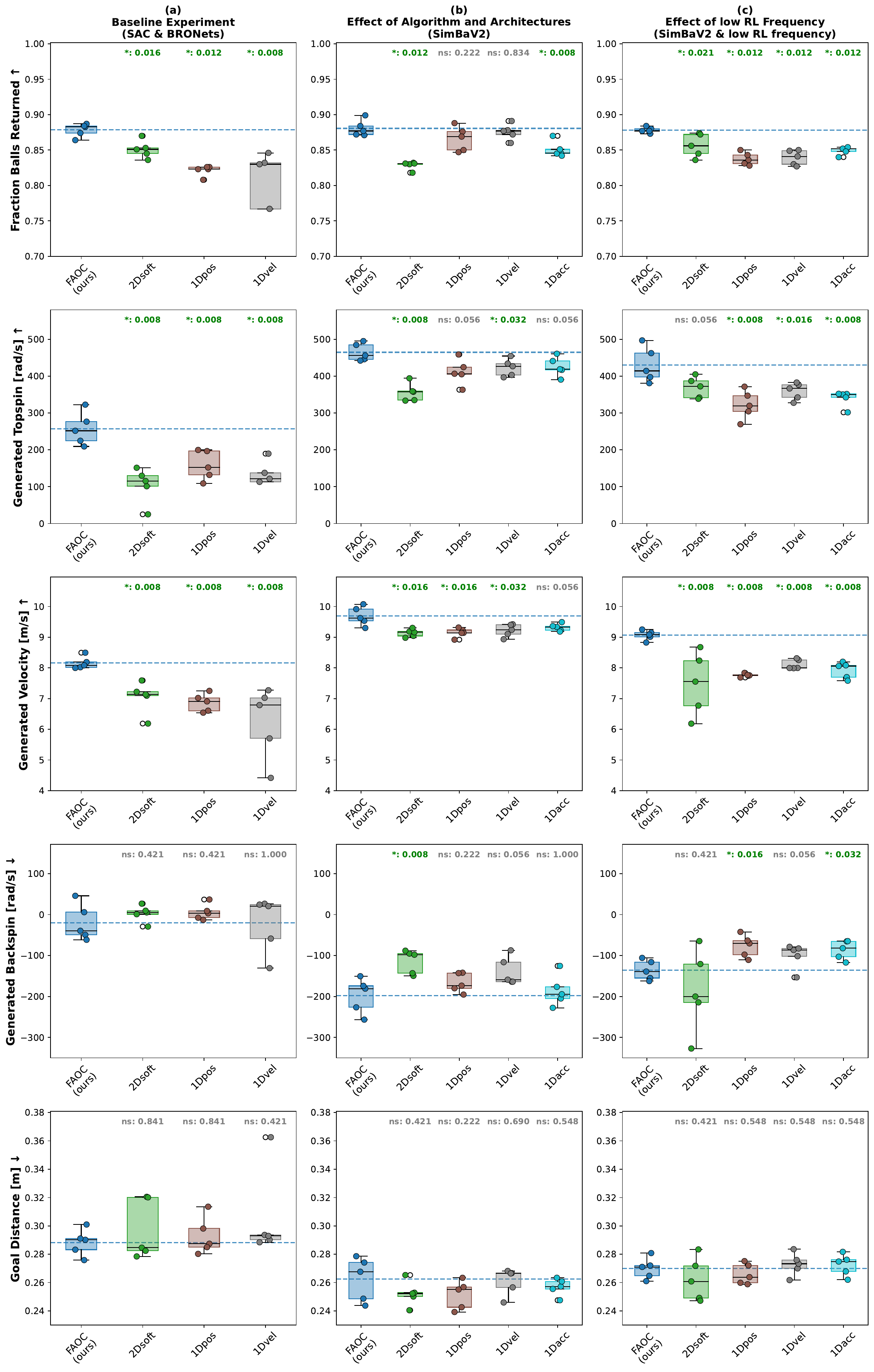}
    \caption{Statistical comparison of performance metrics (plot rows) for the five methods (five seeds each), i.e., FAOC (ours), 2Dsoft, 1Dpos, 1Dvel, and 1Dacc, for the three conducted experiments (plot columns), i.e., (a) baseline experiment using SAC, (b) study of the effect of algorithm and architectures using SimBaV2, and (c) study of the effect of lower RL frequency with SimBaV2. Dashed lines indicate mean value of FAOC (ours). For the baseline experiment, results for 1Dacc are not reported since three seeds did not learn to return balls to the opponent. Arrows next to metrics names indicate which direction is better for the metric, either higher is better (up arrow) or lower is better (down arrow). Above each boxplot of the methods we compare FAOC against, we report p-values of pairwise Mann-Whitney U tests between the corresponding method and FAOC, where $*$ denotes a statistically significant advantage for FAOC ($p<0.05$) while $ns$ indicates no significant difference. Importantly, none of the methods is significantly better than FAOC in any of the metrics for any of the experiments.}
    \label{fig:stat_analysis}
    \vspace{-0.5cm}
\end{figure}

\subsection{Baseline experiment (SAC + BRONets)}
\label{sec:SAC-comparison}
We compare FAOC, 2Dsoft, 1Dpos, 1Dvel, and 1Dacc using off-policy RL algorithm Soft Actor-Critic (SAC)~\cite{haarnoja2018soft} with BRO-based policy and critic network architectures~\cite{nauman2024bigger}. Hyperparameters and additional training details are provided in~\cite{ace2026}. \Cref{fig:sac-comparison} reports normalized reward as a function of environment steps (state--action--reward--next-state transitions).
FAOC achieves the highest sample efficiency and final performance, while the two 2D controllers (FAOC and 2Dsoft) consistently outperform the 1D variants. The latter also exhibit substantially larger variability across random seeds. Statistical comparisons of the evaluation metrics are shown in \Cref{fig:stat_analysis}a using the best-performing checkpoint from each of the five training seeds. The 1Dacc controller is excluded because three of its five seeds achieve a return rate below 50\%.

\begin{figure*}[t]
    \centering
    \begin{subfigure}{0.37\linewidth}
        \centering
        \includegraphics[width=\linewidth]{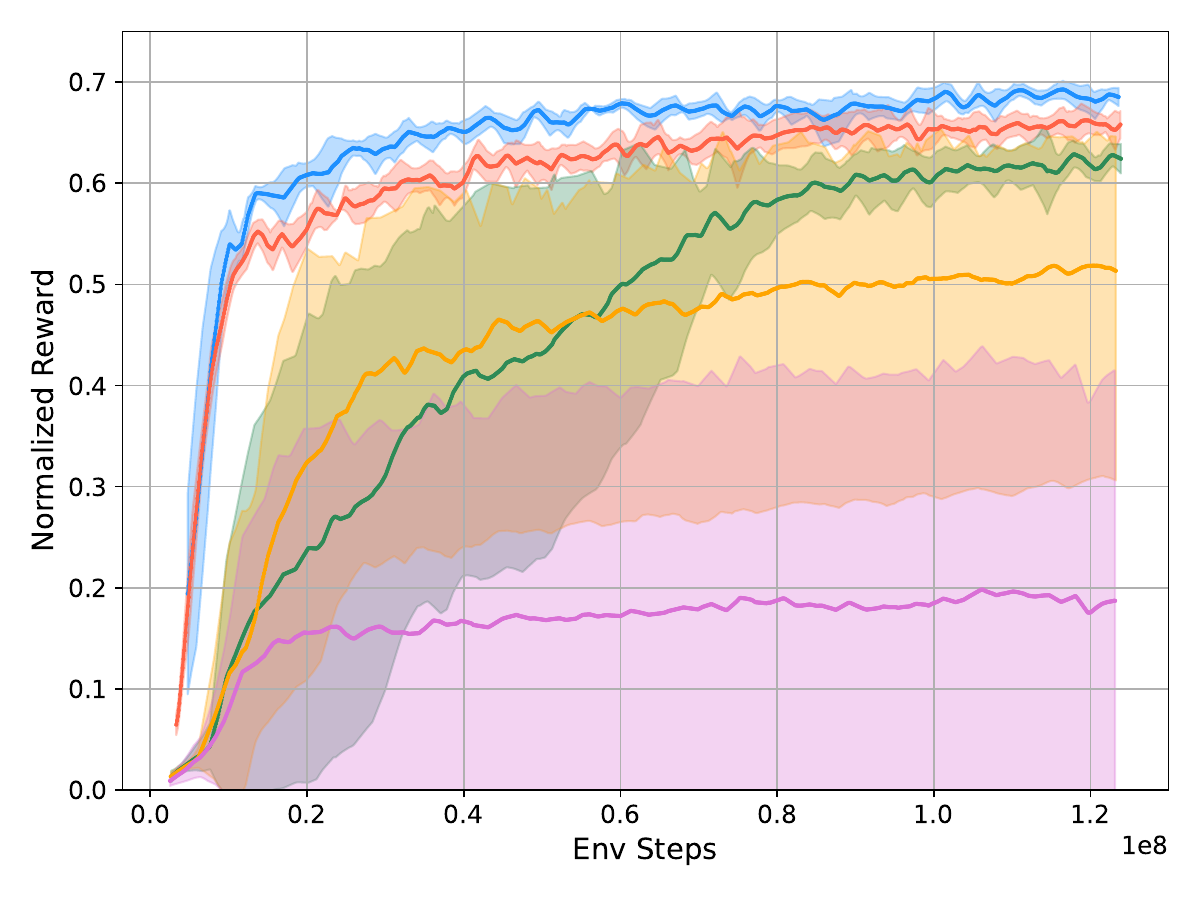}
        \captionsetup{justification=centering}
        \caption{Baseline experiment \\(SAC \& BRONets)}
        \label{fig:sac-comparison}
    \end{subfigure}
    \hfill
    \begin{subfigure}{0.305\linewidth}
        \centering
        \includegraphics[width=\linewidth]{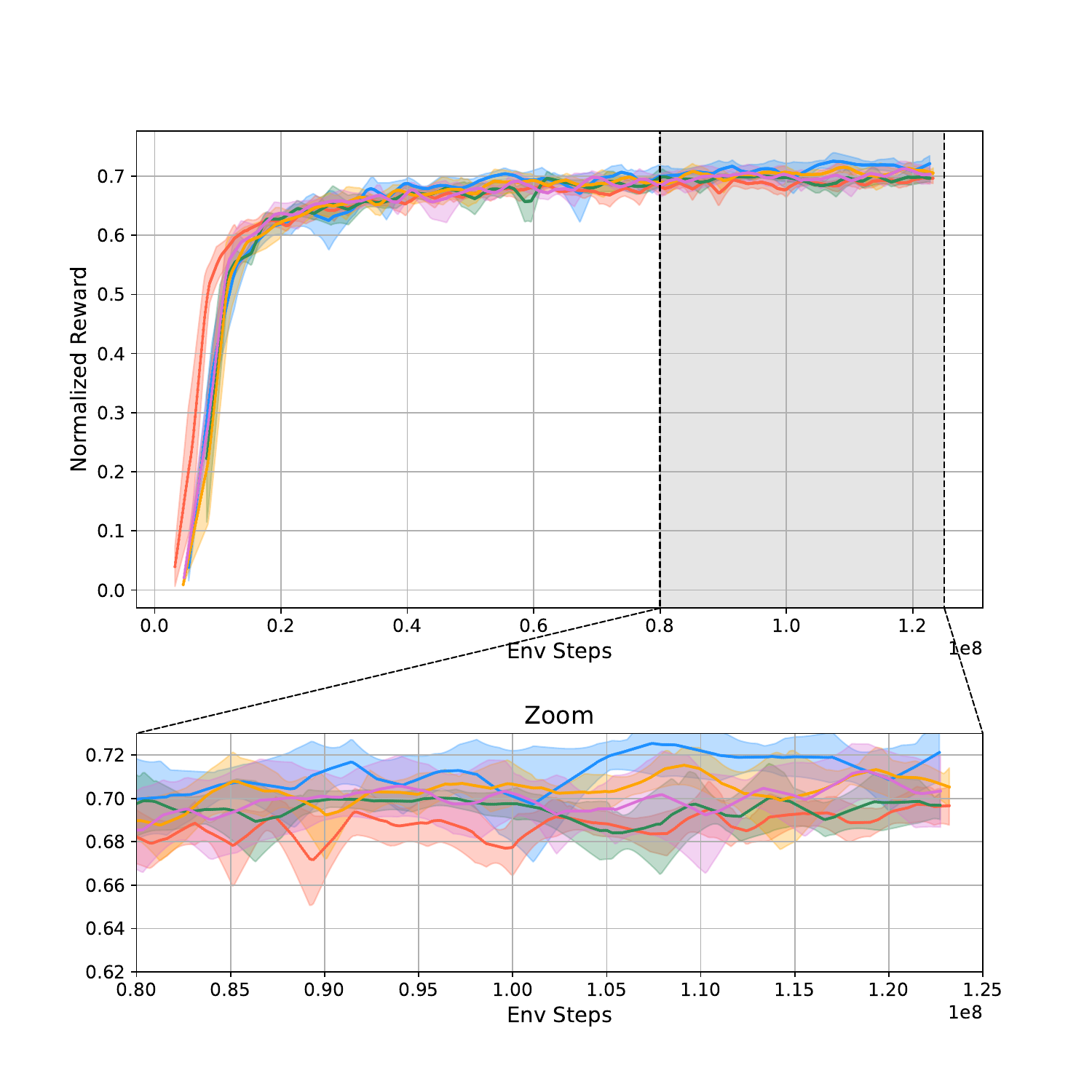}
        \captionsetup{justification=centering}
        \caption{Effect of algorithm and architecture (SimBaV2)}
        \label{fig:simbav2-comparison}
    \end{subfigure}
    \hfill
    \begin{subfigure}{0.305\linewidth}
        \centering
        \includegraphics[width=\linewidth]{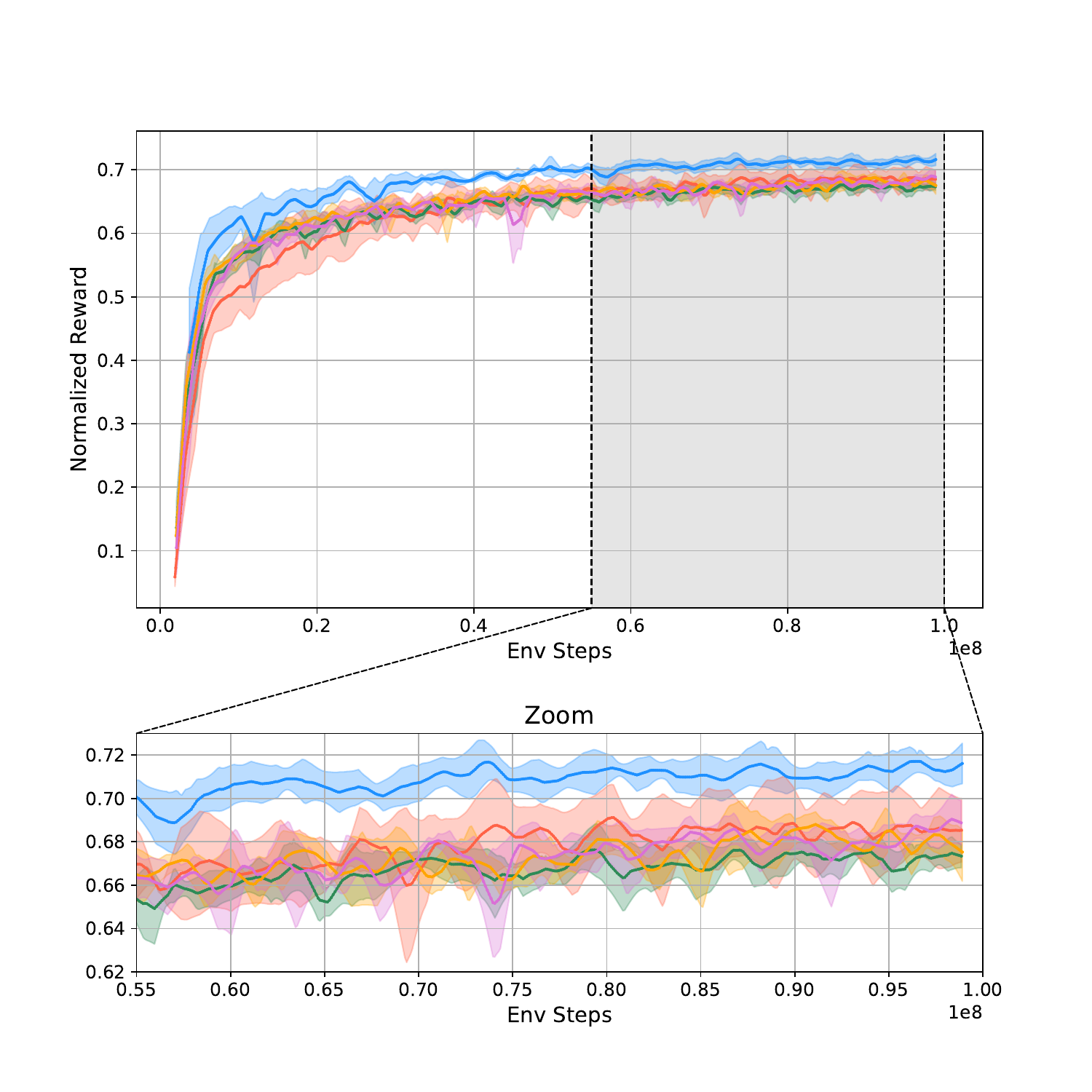}
        \captionsetup{justification=centering}
        \caption{Effect of low RL frequency \\(SimBaV2 \& low RL frequency)}
        \label{fig:128ms-comparison}
    \end{subfigure}
    \hfill
    \begin{subfigure}{0.49\textwidth}
        \centering
        \begin{tikzpicture}
            \node[draw=none]{
                \vspace{4pt}

                \makebox[0.9\textwidth][c]{
                    \small
                    \begin{tabular}{ccccc}
                        \legendline{dodgerblue}{FAOC (ours)} &
                        \legendline{tomato}{2Dsoft}          &
                        \legendline{seagreen}{1Dpos}         &
                        \legendline{orange}{1Dvel}           &
                        \legendline{orchid}{1Dacc}
                    \end{tabular}
                }
            };
        \end{tikzpicture}
    \end{subfigure}

    \caption{Means (bold lines) and standard deviations (shaded areas) of normalized reward as a function of environment steps across five seeds for each controller. (a) Baseline experiment using SAC algorithm and BRONet architectures. (b) Effect of algorithm and architectures using SimBaV2. (c) Effect of low RL frequency using SimBaV2. Experiment (b) tests whether FAOC is intrinsically better than the other methods, regardless of algorithm and architectures. Experiment (c) tests the hypothesis that FAOC features better trajectory segment controllability by combining 2D waypoints and reachability awareness through its state-dependent action space formulation. Comparing zoomed-in areas of (b) and (c) reveals that FAOC retains better controllability than the other controllers when reducing RL frequency.}
    \label{fig:training_curves}
\end{figure*}

FAOC significantly outperforms all other methods in return rate, generated topspin, and generated velocity, explaining its superior reward.
No statistically significant differences are observed in generated backspin or goal distance.
All methods generate little backspin, likely because cutting the ball at high speed while avoiding the table is challenging.
Likewise, placing the ball at any arbitrary location regardless of the incoming shot is difficult.
The measured aiming error (approximately \qty{30}{\centi\meter}) is sufficient for coarse shot placement and is consistent with the conservative reward design used to improve zero-shot transfer to the real robot.

The inferior performance of the 1D controllers can be attributed to their reduced controllability: the RL agent directly specifies only one waypoint variable, leaving the remaining degrees of freedom to the optimizer. This limits the set of achievable behaviors and increases sensitivity to exploration, resulting in lower performance and higher variance across seeds. Although 2Dsoft provides the same controllability as FAOC, its state-independent action space contains unreachable waypoints. Exploring such infeasible actions reduces sample efficiency, making it more difficult to discover optimal policies than with FAOC.

\subsection{Effect of algorithm and architectures (SimBaV2)}
\label{ssec:simbav2-comparison}

To assess whether FAOC's advantage persists with stronger RL methods, we repeat the comparison using the algorithmic and architectural modifications of SimBaV2~\cite{lee2025hyperspherical}. Specifically, we replace the critics with categorical distributional critics, which have been shown to improve exploration and RL performance~\cite{sun2025intrinsic,farebrother2024stop}. Additionally, we increase the actor and critic sizes by factors of $1.71$ (i.e., from 2,570,073 to 4,407,385 parameters) and $8.07$ (i.e., from 6,380,545 to 51,519,520 parameters), respectively, relative to \Cref{sec:SAC-comparison}. This has been shown to improve expressivity, resulting in better exploration~\cite{wang20251000}. Training curves and statistical comparisons are reported in \Cref{fig:simbav2-comparison} and \Cref{fig:stat_analysis}b.



Because the rewards for spin and velocity are proportional to $\tanh(\text{spin}/450)$ and $\tanh(\text{vel}/8)$, improvements at large values in these metrics produce only small reward differences.
Consequently, the normalized reward curves appear similar despite the clearer distinctions observed in \Cref{fig:stat_analysis}b.
All controllers improve substantially over the baseline, yet FAOC consistently remains among the best-performing methods and retains statistically significant superiority in eight comparisons. These results indicate that although stronger RL algorithms and larger networks improve all controllers, FAOC's action space formulation continues to provide a performance advantage.

\subsection{Effect of low RL frequency (SimBaV2 \& low RL frequency)}
\label{sec:RL-freq}

The experiments above use an RL decision frequency of \qty{31.25}{\hertz}, allowing the policy to replan every \qty{32}{\milli\second}. Frequent replanning improves controllability because each action shapes only a short trajectory segment, reducing the impact of suboptimal decisions. However, it also increases the number of decisions per episode, making credit assignment more difficult.
We want to consider systems where, e.g., due to mechanical, sensory or communication latency, only lower control frequency is possible.
Visual-motor latency of humans can be as fast as \qty{150}{\milli\second}, so for this experiment we choose trajectory segments of \qty{128}{\milli\second}, corresponding to an RL frequency of \qty{7.8125}{\hertz}.
All controllers are trained using the same algorithm and architectures as in \Cref{ssec:simbav2-comparison}.

\Cref{fig:128ms-comparison} shows that FAOC largely preserves the performance achieved at the higher decision frequency, as highlighted by the zoomed view of the final 45 million environment steps. In contrast, the remaining controllers degrade more substantially. This trend is confirmed by the statistical analysis in \Cref{fig:stat_analysis}c, where FAOC is significantly better than the competing methods in 13 pairwise comparisons, compared with 8 in \Cref{fig:stat_analysis}b.

\section{Conclusion}
\label{sec:conclusions}
We introduced {Feasible Action for Optimal Control} (FAOC), a control framework that maps Reinforcement Learning (RL) actions from a static convex set to a state-dependent feasible parameter set. We demonstrated FAOC on robot table tennis, where it enabled a physical robot to compete and win against professional-level players in official matches~\cite{ace2026}. Experimental results confirm that FAOC consistently outperforms existing RL-OC architectures by seamlessly combining strict reachability awareness with superior trajectory segment controllability.

\section*{Acknowledgment}
The authors sincerely thank Daniel Arnström for developing the DAQP solver and for his prompt, invaluable support throughout this project. Additionally, the authors acknowledge Christian Conti, Dunai Fuentes Hitos, Mario Ynocente Castro, Lison Abecassis, Yannik Nagel, Andrea Scotti, Etienne Walther, Jengyan Wong, and the platform team of Sony AI led by Piyush Khandelwal for their contributions to the training pipeline used in the experiments of this paper.

\bibliographystyle{IEEEtran}
\bibliography{biblio}


\end{document}